\DeclareRobustCommand{\DLLogo}{%
  \begingroup\normalfont
  \kern-1.75pt\includegraphics[align=c,height=1.25\baselineskip]{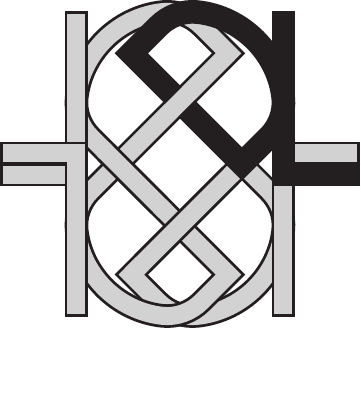}\kern-1.5pt%
  \endgroup
}
\newtheorem{theorem}{Theorem}
\newtheorem{proposition}{Proposition}
\newtheorem{definition}{Definition}
\newtheorem{lemma}{Lemma}
\newtheorem{corollary}{Corollary}
\newtheorem{claim}{Claim}
\newtheorem{remark}{Remark}
\newif\ifappendix
\newif\ifshort
\newcommand{\nb}[1]{}
\newcommand{\p}{\varphi}
\newcommand{\eset}{\emptyset}
\newcommand{\Int}{\ensuremath{\mathcal{I}}\xspace}
\newcommand{\con}[1]{\ensuremath{\mathsf{con}(#1)}\xspace}
\newcommand{\Until}{\ensuremath{\mathbin{\mathcal{U}}}\xspace}
\newcommand{\Next}{{\ensuremath{\raisebox{0.25ex}{\text{\scriptsize{$\bigcirc$}}}}}}
\newcommand{\monodic}{\ensuremath{\mathop{\ooalign{$\Box$ \cr \kern0.57ex \raisebox{0.2ex}{\scalebox{0.55}{$1$}}}\rule{0pt}{1.5ex} \kern-0.7ex}}\xspace}
\DeclareRobustCommand{\zerodic}{\ensuremath{\mathop{\ooalign{$\Box$ \cr \kern0.57ex \raisebox{0.2ex}{\scalebox{0.55}{$0$}}}\rule{0pt}{1.5ex} \kern-0.7ex}}\xspace}
\newcommand{\NC}{\ensuremath{{\sf N_C}}\xspace}
\newcommand{\NI}{\ensuremath{{\sf N_I}}\xspace}
\newcommand{\NR}{\ensuremath{{\sf N_R}}\xspace}
\newcommand{\ALCO}{\ensuremath{\smash{\mathcal{ALCO}}}\xspace}
\newcommand{\ALCOu}{\ensuremath{\smash{\mathcal{ALCO}_{\!u}}}\xspace}
\newcommand{\ALCOud}{\ensuremath{\smash{\mathcal{ALCO}_{\!u}^{\defdes}}}\xspace}
\newcommand{\ELO}{\ensuremath{\smash{\mathcal{ELO}}}\xspace}
\newcommand{\ELOud}{\ensuremath{\smash{\mathcal{ELO}_{\!u}^{\defdes}}}\xspace}
\newcommand{\quasimod}{\ensuremath{\Qmf}\xspace}
\newcommand{\runs}{\ensuremath{\Rmf}\xspace}
\newcommand{\gruns}{\ensuremath{\Gmf}\xspace}
\newcommand{\funcand}{\ensuremath{\boldsymbol{q}}\xspace}
\newcommand{\settp}{\ensuremath{\boldsymbol{T}}\xspace}
\newcommand{\contp}{\ensuremath{{\boldsymbol{t}}}\xspace}
\newcommand{\ML}{\ensuremath{\smash{\mathcal{ML}}}\xspace}
\newcommand{\MLALCOu}{\ensuremath{\smash{\ML_{\ALCOu}}}\xspace}
\newcommand{\MLALCOud}{\ensuremath{\smash{\ML_{\ALCOud}}}\xspace}
\newcommand{\Sfivee}{\ensuremath{\mathbf{S5}\times\mathbf{S5}}}
\newcommand{\SfiveALC}{\ensuremath{\mathbf{S5}_{\mathcal{ALC}}}}
\newcommand{\SfiveALCOu}{\ensuremath{\mathbf{S5}_{\ALCOu}}}
\newcommand{\SfiveALCOud}{\ensuremath{\mathbf{S5}_{\ALCOud}}}
\newcommand{\TL}{\ensuremath{\smash{\mathcal{TL}}}\xspace}
\newcommand{\TLALCOud}{\ensuremath{\smash{\TL_{\mathcal{ALCO}_{\!u}^{\defdes}}}}\xspace}
\newcommand{\TLALCOu}{\ensuremath{\smash{\TL_{\mathcal{ALCO}_{\!u}}}}\xspace}
\newcommand{\LTLALCO}{\ensuremath{\smash{\LTL_{\mathcal{ALCO}}}}\xspace}
\newcommand{\LTLALCOd}{\ensuremath{\smash{\LTL_{\mathcal{ALCO}^{\defdes}}}}\xspace}
\newcommand{\LTLALCOud}{\ensuremath{\smash{\LTL_{\mathcal{ALCO}_{\!u}^{\defdes}}}}\xspace}
\newcommand{\LTLALCOu}{\ensuremath{\smash{\LTL_{\mathcal{ALCO}_{\!u}}}}\xspace}
\newcommand{\LTLfALCOud}{\ensuremath{\smash{\LTL^{\boldsymbol{f}}_{\mathcal{ALCO}_{\!u}^{\defdes}}}}\xspace}
\newcommand{\LTLfALCOu}{\ensuremath{\smash{\LTL^{\boldsymbol{f}}_{\mathcal{ALCO}_{\!u}}}}\xspace}
\newcommand{\LTL}{\ensuremath{\mathbf{LTL}}\xspace}
\newcommand{\QTLfr}[3]{\ensuremath{{\mathcal{T}_{\Until}\mathcal{QL}^{#2}_{#3}}}\xspace}
\newcommand{\TDLite}{\ensuremath{\textit{TDL-Lite}}\xspace}
\newcommand{\last}{\ensuremath{\textit{last}}\xspace}
\newcommand{\defdes}{\ensuremath{\smash{\iota}}\xspace}
\newcommand{\PTime}{\textsc{PTime}}
\newcommand{\ExpTime}{\textsc{ExpTime}}
\newcommand{\NExpTime}{\textsc{NExpTime}}
\newcommand{\Fmf}{\ensuremath{\mathfrak{F}}\xspace}
\newcommand{\Gmf}{\ensuremath{\mathfrak{G}}\xspace}
\newcommand{\Mmf}{\ensuremath{\mathfrak{M}}\xspace}
\newcommand{\Qmf}{\ensuremath{\mathfrak{Q}}\xspace}
\newcommand{\Rmf}{\ensuremath{\mathfrak{R}}\xspace}
\newcommand{\Tmf}{\ensuremath{\mathfrak{T}}\xspace}
\newcommand{\Cmc}{\ensuremath{\mathcal{C}}\xspace}
\newcommand{\Imc}{\ensuremath{\mathcal{I}}\xspace}
\begin{document}

\copyrightyear{2023}
\copyrightclause{Copyright for this paper by its authors.
  Use permitted under Creative Commons License Attribution 4.0
  International (CC BY 4.0).}

\conference{
}

\ifappendix
\title{Non-Rigid Designators in Epistemic and Temporal Free Description Logics (Extended Version)}
\fi
\ifshort
\title{Non-Rigid Designators in Epistemic and Temporal Free Description Logics (Extended Abstract)}
\fi

\author[1]{Alessandro Artale}
[
orcid=0000-0002-3852-9351,
email=artale@inf.unibz.it,
url=http://www.inf.unibz.it/~artale/,
]
\address[1]{Free University of Bozen-Bolzano}

\author[2]{Andrea Mazzullo}
[
orcid=0000-0001-8512-1933,
email=andrea.mazzullo@unitn.it,
url=https://sites.google.com/view/andreamazzullo/,
]
\address[2]{University of Trento}

%

%


\begin{abstract}
Definite descriptions, such as `the smallest planet in the Solar System', have been recently recognised as semantically transparent devices for object identification in knowledge representation formalisms.
Along with individual names, they have been introduced also in the context of description logic languages, enriching the expressivity of standard nominal constructors.
Moreover, in the first-order modal logic literature, definite descriptions have been widely investigated for their \emph{non-rigid} behaviour, which allows them to denote different objects at different states.
In this direction, we introduce \emph{epistemic} and \emph{temporal} extensions of standard description logics, with nominals and the universal role, additionally equipped with definite descriptions constructors.
Regarding names and descriptions, in these languages we allow for: possible \emph{lack of denotation}, ensured by \emph{partial} models, coming from free logic semantics as a generalisation of the classical ones; and \emph{non-rigid designation} features, obtained by assigning to terms distinct values across states, as opposed to the standard rigidity condition on individual expressions.
In the absence of the rigid designator assumption, we show that the satisfiability problem for epistemic free description logics is $\NExpTime$-complete, while satisfiability for temporal free description logics
over linear time structures
is undecidable.
\end{abstract}

\begin{keywords}
Epistemic and temporal description logics,
Definite descriptions,
Non-rigid designators
\end{keywords}



\maketitle


\section{Introduction}
\label{sec:intro}

\emph{Definite descriptions},
like `the smallest planet in the Solar System',
are expressions having form `the $x$ such that $\p$'. Together with \emph{individual names},
such as `Mercury',
they are used as \emph{referring expressions} to identify objects in a given domain~\cite{BorEtAl16,BorEtAl17,TomWed18}.
Definite description and individual names can also \emph{fail to denote} any object at all, as in the cases of the definite description `the planet between Mercury and the Sun' or the individual name `Vulcan'.
Formal accounts that address these aspects and still admit definite descriptions as genuine terms of the language, on a par with individual names, are usually based on so-called \emph{free logics}~\cite{Ben02,Leh02,Ind21,IndZaw21}.
These are in contrast with classical logic approaches, in which individual names are assumed to always designate, and where definite descriptions are paraphrased in terms of sentences expressing existence and uniqueness conditions (an approach dating back to Russell~\cite{Rus05}).
\ifshort
Recently,
definite descriptions have been introduced into description logic (DL) formalisms~\cite{NeuEtAl20,ArtEtAl20b,ArtEtAl21a} as well.
\fi
\ifappendix
Recently,
definite descriptions have been introduced into description logic (DL) formalisms~\cite{NeuEtAl20,ArtEtAl20b,ArtEtAl21a}, by enriching standard languages with nominals like $\ALCO$ and $\ELO$.
In particular, the DLs $\ALCOud$ and $\ELOud$ from~\cite{ArtEtAl20b,ArtEtAl21a}, which include the universal role and are based on a free logic semantics allowing for non-denoting terms, have been shown to have, respectively, $\ExpTime$-complete ontology satisfiability and $\PTime$-complete subsumption checking problem, hence matching the complexities of the classical
DL counterparts.
\fi

\ifshort
In \emph{modal} settings, such as temporal or epistemic,
referring expressions can also behave as \emph{non-rigid designators}, meaning that they can denote different individuals across different states (epistemic alternatives, instants of time, etc.).
For this reason, non-rigid descriptions and names
have been widely investigated in first-order modal and temporal logics~\cite{Coc84,Gar01,BraGhi07,KroMerz08,FitMen12,CorOrl13,Ind20,Orl21}.
\fi
\ifappendix
In addition, expressions referring to the same object cannot always be mutually substituted  in a sentence while preserving the truth value.
For instance, even if everybody is aware that Mercury is very close to the Sun, not everyone thereby knows that also the smallest planet in the Solar System is very close to the Sun, despite the fact that `the smallest planet in the Solar System' and `Mercury' denote the same object.
Indeed, \emph{modal} settings, like the temporal or the epistemic ones, are
\emph{referentially opaque} contexts, where the \emph{intension} (i.e., the meaning) of a term might not coincide with its \emph{extension} (that is, its referent)~\cite{Fit04}.
In these cases, referring expressions can behave as \emph{non-rigid designators}, meaning that they can denote different individuals across different states (epistemic alternatives, instants of time, etc.).

For example, in an epistemic scenario, `the smallest planet' can be conceived as denoting another astronomical object, by someone unaware of its actual reference to Mercury. Similarly, on a temporal scale, `the smallest planet' refers to Mercury nowadays, but denoted Pluto less than twenty years ago~\cite{FitMen12}.
%
Due to this problematic interplay between designation and modalities,
non-rigid descriptions and names
have been widely investigated in first-order modal and temporal logics~\cite{Coc84,Gar01,BraGhi07,KroMerz08,FitMen12,CorOrl13,Ind20,Orl21},
as \emph{individual concepts} or \emph{flexible terms}
capable of taking different values across states.
\fi
%
\ifshort
However,
with the exception of~\cite{MehRud11},
non-rigid designators have received
little attention
in modal DLs,
despite the interest in temporal~\cite{WolZak98,ArtFra05,LutEtAl08} and epistemic~\cite{DonEtAl98,CalEtAl08,ConLen20} extensions.
\fi
\ifappendix
However, despite the wide body of research both on temporal~\cite{WolZak98,ArtFra05,LutEtAl08} and epistemic~\cite{DonEtAl98,CalEtAl08,ConLen20} extensions,
non-rigid designators have received, to the best of our knowledge,
little attention in modal DLs.  In an epistemic DL context, non-rigid
individual names appear in~\cite{MehRud11}, under an approach that
involves abstract individual names interpreted on an infinite common
domain, but without definite descriptions.  
\fi

In this paper, we extend the free DLs proposed for the non-modal case
in~\cite{ArtEtAl20b,ArtEtAl21a}, by: (i) adding \emph{epistemic
  modalities}, such as $\Box$ (\emph{box}, read as `it is known
that'), or \emph{temporal} ones, like $\Until$ (\emph{until})\ifshort
; \fi \ifappendix applicable both to formulas and concepts of the
language; \fi (ii) introducing nominals built from \emph{definite
  descriptions} of the form $\iota C$ (read as `the object that is
$C$'), where $C$ is a concept, alongside the standard ones based on
\emph{individual names};
(iii) dropping the \emph{rigid designator assumption}, hence allowing
terms to behave as flexible individual concepts across states.  We
study the complexity of formula satisfiability, showing that, without
the rigid designator assumption, this problem for epistemic free DLs
is $\NExpTime$-complete (same as the logic \Sfivee~\cite[Theorem
5.26]{GabEtAl03}), whereas it becomes undecidable for temporal free
DLs interpreted on linear time structures (while\nb{A: added} it is
decidable without definite descriptions and with the RDA~\cite[Theorem
14.12]{GabEtAl03}).

Section~\ref{sec:tdlnonrig} provides the necessary background on the
epistemic and temporal free DLs introduced in this paper. In
Section~\ref{sec:motivation}, we motivate with examples the
syntactical and semantical choices for these
languages. Section~\ref{sec-reasoning} studies the complexity of
formula satisfiability in epistemic free DLs, while
Section~\ref{sec:reasontfdl} focuses on the undecidability of temporal
free DLs over time flows consisting of finite or infinite traces.
Finally, Section~\ref{sec:conc} concludes the paper, discussing open
problems and future research directions.


\section{Epistemic and Temporal Free Description Logics}
\label{sec:tdlnonrig}

\ifappendix
In this section, we introduce the syntax of
$\MLALCOud$ and $\TLALCOud$, respectively a modal and a temporal
extension of the free DL $\ALCOud$ language~\cite{ArtEtAl20b,ArtEtAl21a}, as well as their semantics based on epistemic and temporal frames, respectively.

\subsection{Epistemic free description logics}
\label{subsec:ml-nonrig}
\fi

\ifshort
The
DL
$\SfiveALCOud$
is a modalised extension of the free DL
$\ALCOud$.
\fi
\ifappendix
The
$\MLALCOud$
language
is a modalised extension of the free DL
$\ALCOud$
language.
\fi
%
\ifappendix
Let \NC, \NR and \NI be countably infinite and pairwise disjoint sets
of \emph{concept names}, \emph{role names}, and \emph{individual
  names}, respectively. The $\MLALCOud$ \emph{terms} and
\emph{concepts} are defined as:
\begin{gather*}
  \tau ::= a \mid \defdes C, \qquad C ::= A \mid \{ \tau \} \mid \lnot
  C \mid C \sqcap C
\mid \exists r.C
\mid \exists u.C
\mid \Diamond C
\end{gather*}
where $a \in \NI$, $A \in \NC$, $r \in \NR$,
and $u$ is the
\emph{universal role}.
A term of the form $\defdes C$ is called a \emph{definite
  description}, with the concept $C$ being the \emph{body of
  $\defdes C$}, and a concept $\{ \tau \}$ is called a \emph{(term)
  nominal}. All the usual syntactic abbreviations are assumed:
$\bot = A \sqcap \lnot A$, $\top = \lnot \bot$,
$C \sqcup D = \lnot (\lnot C \sqcap \lnot D)$,
$C \Rightarrow D = \lnot C \sqcup D$,
$\forall s. C = \lnot \exists s. \lnot C$, with
$s \in \NR \cup \{ u \}$, and $\Box C = \lnot \Diamond \lnot C$. We
will consider also the \emph{reflexive diamond} operator,
$\Diamond^{+} C = C \sqcup \Diamond C$, and the \emph{reflexive box}
operator, $\Box^{+} C = C \sqcap \Box C$.

An \emph{$\MLALCOud$ axiom}, denoted as $\alpha$, is either a
\emph{concept inclusion} (\emph{CI}) of the form $C \sqsubseteq D$,
or an \emph{$\MLALCOud$ assertion} of the form $C(\tau)$ or
$r(\tau_1,\tau_2)$, where $C, D$ are concepts,  $r \in \NR$, and
$\tau, \tau_1, \tau_2$ are terms.
An \emph{$\MLALCOud$ formula} is an expression of the following form:
\[
\p ::=
(\alpha)
\mid \neg \p \mid \p \land \p
\mid \Diamond \p
\]
\fi
\ifshort
Let \NC, \NR and \NI be countably infinite and pairwise
disjoint sets of \emph{concept names}, \emph{role names}, and
\emph{individual names}, respectively. The
$\SfiveALCOud$  \emph{concepts} and \emph{formulas} are defined as:
\begin{align*}
 C ::= A \mid \{ \tau \} \mid \lnot
  C \mid C \sqcap C
\mid \exists r.C
\mid \exists u.C
  \mid \Diamond C,
  \qquad
\p ::=
(\alpha)
\mid \neg \p \mid \p \land \p
\mid \Diamond \p,
\end{align*}
where
 $\tau ::= a \mid \defdes C$ is an 
 \emph{$\SfiveALCOud$ term}, 
 $a \in \NI$, $A \in \NC$, $r \in \NR$,
$u$ is the \emph{universal role} and $\alpha$ is an
\ifshort
\emph{$\SfiveALCOud$ axiom},
\fi
\ifappendix
\emph{$\MLALCOud$ axiom},
\fi
denoting either a \emph{concept inclusion} (\emph{CI}) of
the form $C \sqsubseteq D$, or an
\ifshort
\emph{$\SfiveALCOud$ assertion}
\fi
\ifappendix
\emph{$\MLALCOud$ assertion}
\fi
of the
form $C(\tau)$ or $r(\tau_1,\tau_2)$, where $C, D$ are concepts,
$r \in \NR$, and $\tau, \tau_1, \tau_2$ are terms.
A term of the form $\defdes C$ is called a \emph{definite
  description},
  \ifappendix
  with the concept $C$ being the \emph{body of
  $\defdes C$},
  \fi
  and a concept $\{ \tau \}$ is a \emph{(term)
  nominal}.
    \ifshort
  All the usual syntactic abbreviations are assumed, such as those for the \emph{box} operator, $\Box C = \lnot \Diamond \lnot C$, and for the \emph{reflexive}
versions, $\Diamond^{+} C = C \sqcup \Diamond C$ and  $\Box^{+} C = C \sqcap \Box C$.
\fi
  \ifappendix
  All the usual syntactic abbreviations are assumed.  In
particular, we will consider also the \emph{reflexive diamond}
operator, $\Diamond^{+} C = C \sqcup \Diamond C$, the \emph{box} operator, $\Box C = \lnot \Diamond \lnot C$, and the
\emph{reflexive box} operator, $\Box^{+} C = C \sqcap \Box C$.
\fi
\fi

\ifshort
Given a \emph{epistemic frame} $\Fmf = (W, \sim)$, with $W$ being a non-empty
set of \emph{worlds} (or \emph{states}) and $\sim \ \subseteq W \times W$ being an equivalence relation on $W$, a \emph{partial epistemic interpretation} based on
$\Fmf$ is a triple $\Mmf = ( \Fmf, \Delta, \Int)$, where: $\Fmf$ is
the frame of $\Mmf$; $\Delta$ is a non-empty set, called the
\emph{domain} of $\Mmf$ (we adopt the so-called \emph{constant domain
  assumption}~\cite{GabEtAl03}); and $\Int$ is a function associating
with every $w \in W$ a \emph{partial interpretation}
$\Imc_{w} = (\Delta, \cdot^{\Imc_{w}})$ that maps every $A \in \NC$ to
a subset of $\Delta$, every $r\in\NR$ to a subset of
$\Delta \times \Delta$, the universal role $u$ to the set
$\Delta \times \Delta$ itself, and every $a$ in a \emph{subset} of
$\NI$ to an element in $\Delta$. In other words, every
$\cdot^{\Imc_{w}}$ is a total function on $\NC \cup \NR$ and a
\emph{partia}l function on $\NI$.
We say that
$\Mmf = ( \Fmf, \Delta, \Int)$ is a \emph{total epistemic
  interpretation} if every $\Imc_{w}$, with $w \in W$, is a total
interpretation.
\fi

\ifappendix
Given a \emph{frame} $\Fmf = (W, \lhd)$, with $W$ being a non-empty
set of \emph{worlds} (or \emph{states}) and
$\lhd \subseteq W \times W$ being a binary \emph{accessibility
  relation} on $W$, a \emph{partial modal interpretation} based on
$\Fmf$ is a triple $\Mmf = ( \Fmf, \Delta, \Int)$, where: $\Fmf$ is
the frame of $\Mmf$; $\Delta$ is a non-empty set, called the
\emph{domain} of $\Mmf$ (we adopt the so-called \emph{constant domain
  assumption}~\cite{GabEtAl03}); and $\Int$ is a function associating
with every $w \in W$ a \emph{partial interpretation}
$\Imc_{w} = (\Delta, \cdot^{\Imc_{w}})$ that maps every $A \in \NC$ to
a subset of $\Delta$, every $r\in\NR$ to a subset of
$\Delta \times \Delta$, the universal role $u$ to the set
$\Delta \times \Delta$ itself, and every $a$ in a \emph{subset} of
$\NI$ to an element in $\Delta$. In other words, every
$\cdot^{\Imc_{w}}$ is a total function on $\NC \cup \NR$ and a
\emph{partia}l function on $\NI$.
We say that
$\Mmf = ( \Fmf, \Delta, \Int)$ is a \emph{total modal
  interpretation} if every $\Imc_{w}$, with $w \in W$, is a \emph{total}
interpretation,
meaning that $\cdot^{\Imc_{w}}$ is defined as above, except that it maps \emph{every} $a \in \NI$ to an element of $\Delta$.
\fi

\ifshort
  Given $\Mmf = ( \Fmf, \Delta, \Int)$, with $\Fmf = (W, \sim)$, we
  say that $\Mmf$ satisfies the \emph{rigid designator assumption}
  (\emph{RDA}) if, for every individual name $a \in \NI$ and every
  $w, v \in W$, the following conditions hold:
  \begin{enumerate*}[label=(\roman*)]
  \item $a^{\Int_{w}}$ is defined iff $a^{\Int_{v}}$ is defined; and
  \item
  if
  $a^{\Int_{w}}$ is defined, then
  $a^{\Int_{w}} = a^{\Int_{v}}$, i.e., $a$ is a \emph{rigid
    designator}.
\end{enumerate*}
  An individual name $a \in \NI$ is said
  to \emph{denote in $\Int_{w}$} if $a^{\Int_{w}}$ is defined,
  and we say that it \emph{denotes in $\Mmf$} if $a$ denotes in $\Imc_{w}$, for some $w \in W$.
  Moreover, $a$ is called a \emph{ghost in} $\Mmf$ if,
  for every $w\in W$, $a$ does not denote in $\Int_{w}$.
  
Dropping the RDA is the most general
  assumption, since rigid designators can be enforced by the CI
  $\Diamond^{+}\{a\} \sqsubseteq \Box^+ \{a\}$. Also, partial interpretations generalise the classical ones: an individual
  can be forced to denote at some state (i.e., not being a ghost) with the CI
  $\top \sqsubseteq \Diamond^+\exists u.\{a\}$, and at all states by the formula $\Box^+ (\top \sqsubseteq \exists u.\{a\})$.
  Note that a ghost individual is vacuously rigid.
\fi

\ifappendix
\begin{definition}
  Given $\Mmf = ( \Fmf, \Delta, \Int)$, with $\Fmf = (W, \lhd)$, we
  say that $\Mmf$ satisfies the \emph{rigid designator assumption}
  (\emph{RDA}) if, for every individual name $a \in \NI$ and every
  $w, v \in W$,
    the following condition holds:
if $a^{\Int_{w}}$ is defined, then
    $a^{\Int_{w}} = a^{\Int_{v}}$, i.e., $a$ is a \emph{rigid
      designator}.
  %
  An individual name $a \in \NI$ is said
  to \emph{denote in $\Int_{w}$} if $a^{\Int_{w}}$ is defined,
  and we say that it \emph{denotes in $\Mmf$} if $a$ denotes in $\Imc_{w}$, for some $w \in W$.
  Moreover, $a$ is called a \emph{ghost in} $\Mmf$ if,
  for every $w\in W$, $a$ does not denote in $\Int_{w}$.
\end{definition}
\begin{remark}
Dropping the RDA is the most general
  assumption, since rigid designators can be enforced by the CI
  $\Diamond^{+}\{a\} \sqsubseteq \Box^+ \{a\}$. Moreover, partial interpretations are a generalisation of the classical ones: an individual
  can be forced to denote at some state (i.e., not being a ghost) with the CI
  $\top \sqsubseteq \Diamond^+\exists u.\{a\}$, and to denote at all states with the formula $\Box^+ (\top \sqsubseteq \exists u.\{a\})$.
  Note that a ghost individual is vacuously rigid.
\end{remark}
\fi

Given
\ifshort
$\Mmf = (\Fmf, \Delta, \Int)$, with $\Fmf = (W, \sim)$,
\fi
\ifappendix
$\Mmf = (\Fmf, \Delta, \Int)$, with $\Fmf = (W, \lhd)$,
\fi
and a
world $w \in W$,
we
define the \emph{value} $\tau^{\Int_{w}}$ of a term $\tau$ in $w$ as $a^{\Int_{w}}$, if $\tau  = a$, and as follows, for $\tau = \defdes C$:
\begin{gather*}
		(\defdes C)^{\Int_{w}}  =
			\begin{cases}
				d, & \text{if} \ C^{\Int_{w}} = \{ d \}, \ \text{for some} \ d \in \Delta; \\
				\text{undefined}, & \text{otherwise}.
			\end{cases}
\end{gather*}
As for the \emph{extension} of a concept $C$ in $w$, $C^{\Int_{w}}$ is
as usual with the following additions:
\begin{align*}
  (\Diamond C)^{\Imc_{w}} &= \{ d \in \Delta \mid \exists
  v \in W,
  \ifshort
  w \sim v \colon d \in C^{\Imc_{v}}
  \fi
  \ifappendix
  w \lhd v \colon d \in C^{\Imc_{v}}
  \fi
  \},
  \quad
  \{ \tau \}^{\Imc_{w}} &=
			\begin{cases}
				\{ \tau^{\Imc_{w}} \}, & \text{if $\tau$ denotes in $\Imc_{w}$}, \\
				\, \emptyset, & \text{otherwise,}
			\end{cases}
\end{align*}
where a term $\tau$ is said to \emph{denote} in $\Imc_{w}$ if
$\tau^{\Imc_{w}}$ is defined.
A concept $C$ is \emph{satisfied at $w$ of $\Mmf$} if
$C^{\Imc_{w}} \neq \eset$. 
%
An
\ifshort
\emph{$\SfiveALCOud$ formula $\p$ is satisfied at $w$ of $\Mmf$},
\fi
\ifappendix
\emph{$\MLALCOud$ formula $\p$ is satisfied at $w$ of $\Mmf$},
\fi
  written $\Mmf, w \models \p$, when:
%
\ifshort
\begin{gather*}
	\Mmf, w  \models C(\tau) \text{ \, iff \, } \tau ~\text{denotes in}~{\Imc_{w}} ~\text{and}~\tau^{\Imc_{w}}\in C^{\Imc_{w}},\\
	\Mmf, w \models r(\tau_1,\tau_2) \text{ \, iff \, }
        \tau_1,\tau_2 ~\text{denotes in}~{\Imc_{w}} ~\text{and}~(\tau_1^{\Imc_{w}},\tau_2^{\Imc_{w}}) \in r^{\Imc_{w}}, \\
        \Mmf, w \models C\sqsubseteq D  \text{ \, iff \, } C^{\Imc_{w}} \subseteq D^{\Imc_{w}}, \quad
	\Mmf, w  \models \Diamond \psi  \text{ \, iff \, } \exists v \in W,
		\ifshort
	w \sim v \colon \Mmf, v \models \psi,
	\fi
	\ifappendix
	w \lhd v \colon \Mmf, v \models \psi,
	\fi
 \end{gather*}
 \fi
\ifappendix
\begin{align*}
	\Mmf, w \models C\sqsubseteq D  &\text{ \, iff \, } C^{\Imc_{w}} \subseteq D^{\Imc_{w}}, \\
	\Mmf, w  \models C(\tau) &\text{ \, iff \, } \tau ~\text{denotes in}~{\Imc_{w}} ~\text{and}~\tau^{\Imc_{w}}\in C^{\Imc_{w}},\\
	\Mmf, w \models r(\tau_1,\tau_2) &\text{ \, iff \, }
        \tau_1,\tau_2 ~\text{denotes in}~{\Imc_{w}} ~\text{and}~(\tau_1^{\Imc_{w}},\tau_2^{\Imc_{w}}) \in r^{\Imc_{w}}, \\
	\Mmf, w  \models \Diamond \psi  &\text{ \, iff \, } \exists v \in W,
		\ifshort
	w \sim v \colon \Mmf, v \models \psi,
	\fi
	\ifappendix
	w \lhd v \colon \Mmf, v \models \psi,
	\fi
 \end{align*}
 \fi
 together with the usual interpretation of Boolean operators. An
 $\MLALCOud$ formula $\p$ is \emph{satisfied in $\Mmf$} if there
 exists a world $w$ in $\Mmf$ such that $\Mmf, w \models \p$, and it
 is \emph{partial} (\emph{total}) \emph{satisfiable} if there is a
 partial (total) modal interpretation $\Mmf$ such that $\p$ is
 satisfied in $\Mmf$.

\ifshort
For the temporal DL $\LTLALCOud$, we build
$\LTLALCOud$ \emph{terms}, \emph{concepts}, and \emph{formulas}
similarly to the $\SfiveALCOud$ case, by using the temporal operator \emph{until}, $\Until$, for the construction of concepts, $C \Until D$, and formulas,
$\p \Until \psi$.
$\LTLALCOu$ is obtained by disallowing descriptions.
The \emph{flow of time} is $\Fmf = (\mathbb{N}, <)$, where $\mathbb{N}$ is the set of natural number (with elements called \emph{instants}) and
$<$ is the natural linear order on $\mathbb{N}$.
A \emph{partial temporal interpretation},
or \emph{partial trace}, based on $\Fmf$, is a triple $\Mmf = ( \Fmf, \Delta, \Int)$, defined as in the epistemic case.
We similarly define the notion of \emph{total trace}.
\fi
\ifappendix
\subsection{Temporal free description logics}

For the temporal DL language $\TLALCOud$, we build $\TLALCOud$
\emph{terms}, \emph{concepts}, and \emph{formulas} similarly to the
$\MLALCOud$ case, by using the temporal operator \emph{until},
$\Until$, for the construction of concepts, $C \Until D$, and
formulas, $\p \Until \psi$.

We call \emph{flow of time} a frame $\Fmf = (T, <)$, where $T$ is a
non-empty set of \emph{instants} and $< \ \subseteq T \times T$ is a
strict linear order on $T$.  A \emph{partial temporal interpretation},
or \emph{partial trace}, based on a flow of time $\Fmf$, is a triple
$\Mmf = ( \Fmf, \Delta, \Int)$, defined as in the modal case.  We
similarly define the notion of \emph{total trace}.  If $\Mmf$ is based
on the flow of time $(\mathbb{N}, <)$, where $<$ is the natural strict
linear order on $\mathbb{N}$, we call it an \emph{infinite trace}, and
we often denote it (with an abuse of notation) by
$\Mmf = (\Delta, (\Imc_{t})_{t \in \mathbb{N}})$; whereas, if it is
based on $(\{0, \ldots, n \}, <)$, with $n \in \mathbb{N}$, it is
called a \emph{finite trace}, and it is simply denoted by
$\Mmf = (\Delta, (\Imc_{t})_{t \in T})$, with $T = [0, n]$.  \fi
%
\ifshort
Given a partial trace $\Mmf = ( \Fmf, \Delta, \Int)$,
with $\Fmf = (\mathbb{N}, <)$ and $t \in \mathbb{N}$ (that we call an \emph{instant of $\Mmf$}), the \emph{value} of an $\LTLALCOud$ term $\tau$ at $t$, the
\emph{extension} of an $\LTLALCOud$ concept $C$ at $t$, the
\emph{satisfaction} of a $\LTLALCOud$ formula $\p$ at $t$, are defined
as for the modal case, by replacing the semantics of the $\Diamond$
modal operator with the following one for the $\Until$ temporal
operator:
\begin{align*}
  (C \Until D)^{\Imc_{t}} = \{ d \in \Delta \mid \text{there is} \
			u \in T, t < u \colon d \in D^{\Imc_u} 
			\ \text{and, for all} \
			v \in (t,u)
			,
			d \in C^{\Imc_v}
		\},\\
\Mmf, t \models \p \Until \psi \ \text{iff} \ \text{there is} \
	u \in T, t < u \colon \Mmf, u \models \psi
	\ \text{and, for all $v \in (t,u)$,}\
	\Mmf, v \models \p.
\end{align*}
An $\LTLALCOud$ formula $\p$ (respectively, a concept $C$) is
\emph{partial} (\emph{total}) \emph{satisfiable} if
$\p$ (respectively, $C$) is satisfied at instant $0$ in some partial
(total) trace $\Mmf$.
\fi
\ifappendix
Given a partial trace $\Mmf = ( \Fmf, \Delta, \Int)$, with
$\Fmf = (T, <)$ and $t \in T$ (that we call an \emph{instant of
  $\Mmf$}), the \emph{value} of an $\TLALCOud$ term $\tau$ at $t$, and
the \emph{extension} of an $\TLALCOud$ concept $C$ at $t$, are defined
as for the modal case, by replacing the semantics of the $\Diamond$
modal operator with the following one for the $\Until$ temporal
operator:
\begin{gather*}
  (C \Until D)^{\Imc_{t}} = \{ d \in \Delta \mid \text{there is} \
			u \in T, t < u \colon d \in D^{\Imc_u} 
			\ \text{and, for all} \
			v \in (t,u)
			,
			d \in C^{\Imc_v}
		\}
			.
\end{gather*}
Similarly, the \emph{satisfaction} of a $\TLALCOud$ formula $\p$ at
$t$ of $\Mmf$ is defined as for the modal case, with the following
semantics of $\Until$ replacing the one for the 
$\Diamond$ operator:
\begin{gather*}
  \Mmf, t \models \p \Until \psi \ \text{iff} \ \text{there is} \
	u \in T, t < u \colon \Mmf, u \models \psi
	\ \text{and, for all $v \in (t,u)$,}\
	\Mmf, v \models \p.
\end{gather*}
\fi
%

\ifappendix
As usual, we use the until operator to define the other temporal
operators, as follows.
For concepts: \emph{(strong) next} operator,
$\Next C = \bot \Until C$; \emph{diamond} operator,
$\Diamond C = \top \Until C$; and \emph{box} operator,
$\Box C = \lnot \Diamond \lnot C$; \emph{reflexive diamond} operator,
$\Diamond^{+} C = C \sqcup \Diamond C$; \emph{reflexive box} operator,
$\Box^{+} C = C \sqcap \Box C$. Similar abbreviations are used for
formulas.

We say that a $\TLALCOud$ formula $\p$ (respectively, a concept $C$)
is \emph{satisfiable on partial (total) traces}, if $\p$
(respectively, $C$) is satisfied at time $0$ in some partial (total)
trace $\Mmf$.
%
\fi

\ifshort
  Assertions are syntactic sugar, since $C(\tau)$ and
  $r(\tau_1,\tau_2)$ are captured by the following CIs, respectively:
  $\top \sqsubseteq \exists u. \{ \tau \}, \{ \tau \} \sqsubseteq C$;
  and
  $\top \sqsubseteq \exists u. \{ \tau_{1} \}, \{ \tau_{1} \}
  \sqsubseteq \exists r.\{ \tau_{2} \}$.  To avoid ambiguities, we use
  parentheses when applying Boolean or modal operators to
  assertions. Thus, for instance, the formulas $\lnot (C(\tau))$ and
  $\Diamond (C(\tau))$ abbreviate, respectively,
  $\lnot ( \top \sqsubseteq \exists u. \{ \tau \} \land \{ \tau \}
  \sqsubseteq C)$ and
  $\Diamond( \top \sqsubseteq \exists u. \{ \tau \} \land \{ \tau \}
  \sqsubseteq C)$, whereas the assertions $\lnot C(\tau)$ and
  $\Diamond C(\tau)$ stand, respectively, for
  $\top \sqsubseteq \exists u. \{ \tau \} \land \{ \tau \} \sqsubseteq
  \lnot C$ and
  $\top \sqsubseteq \exists u. \{ \tau \} \land \{ \tau \} \sqsubseteq
  \Next C$.
   Finally, as already observed for $\ALCOud$~\cite{ArtEtAlKR21},
   we point out that formulas are just syntactic sugar in $\MLALCOud$,
   since a CI $C\sqsubseteq D$ can be internalised~\cite{BaaEtAl03a,Rud11} as a concept of
   the form $\forall u. ( C \Rightarrow D)$.

We also remark on a counter-intuitive behaviour without the RDA
assumption. Let us consider the following formula:
$(\{a\} \sqsubseteq \Box C) \land \Diamond (\{a\} \sqsubseteq \lnot
C)$. This formula, while unsatisfiable if the RDA is assumed, is
satisfiable
without the RDA,
since it is
satisfied
in
\ifshort
an epistemic or temporal
\fi
\ifappendix
a modal
\fi
interpretation that interprets the individual name $a$ differently in
different states.

\fi

\ifappendix
\begin{remark}
  Assertions are syntactic sugar, since $C(\tau)$ and
  $r(\tau_1,\tau_2)$ are captured by the following CIs, respectively:
  $\top \sqsubseteq \exists u. \{ \tau \}, \{ \tau \} \sqsubseteq C$;
  and
  $\top \sqsubseteq \exists u. \{ \tau_{1} \}, \{ \tau_{1} \}
  \sqsubseteq \exists r.\{ \tau_{2} \}$.  To avoid ambiguities, we use
  parentheses when applying Boolean or modal operators to
  assertions. Thus, for instance, the formulas $\lnot (C(\tau))$ and
  $\Diamond (C(\tau))$ abbreviate, respectively,
  $\lnot ( \top \sqsubseteq \exists u. \{ \tau \} \land \{ \tau \}
  \sqsubseteq C)$ and
  $\Diamond( \top \sqsubseteq \exists u. \{ \tau \} \land \{ \tau \}
  \sqsubseteq C)$, whereas the assertions $\lnot C(\tau)$ and
  $\Diamond C(\tau)$ stand, respectively, for
  $\top \sqsubseteq \exists u. \{ \tau \} \land \{ \tau \} \sqsubseteq
  \lnot C$ and
  $\top \sqsubseteq \exists u. \{ \tau \} \land \{ \tau \} \sqsubseteq
  \Diamond C$.
   Finally, as already observed for $\ALCOud$~\cite{ArtEtAl21a},
   we point out that formulas are just syntactic sugar in $\MLALCOud$,
   since a CI $C\sqsubseteq D$ can be internalised~\cite{BaaEtAl03a,Rud11} as a concept of
   the form $\forall u. ( C \Rightarrow D)$.
 \end{remark}

 \begin{remark}
We point out a counter-intuitive behaviour of formulas
without the RDA
assumption. Let us consider the following:
$(\{a\} \sqsubseteq \Box C) \land \Diamond (\{a\} \sqsubseteq \lnot
C)$. This formula, while unsatisfiable if the RDA is assumed, is
satisfiable
without the RDA,
since it is
satisfied
in
\ifshort
an epistemic or temporal
\fi
\ifappendix
a modal
\fi
interpretation that interprets the individual name $a$ differently in
different states.
\end{remark}

\fi

\subsection{Formula satisfiability problems and reductions}

Given a class of frames $\Cmc$, the \emph{partial} (\emph{total})
\emph{$\MLALCOud$ formula satisfiability problem over $\Cmc$}
(\emph{with} or \emph{without the RDA}, respectively) is the problem
of deciding, given an $\MLALCOud$ formula $\p$, whether there exists a
partial (total) modal interpretation (with or without the RDA,
respectively) based on a frame in $\Cmc$ that satisfies $\p$.
Similarly, for a class of flows of time $\Cmc$, the \emph{partial}
(\emph{total}) \emph{$\TLALCOud$ formula satisfiability problem over
  $\Cmc$} (\emph{with} or \emph{without the RDA}, respectively) is the
problem of deciding, given an $\TLALCOud$ formula $\p$, whether there
exists a partial (total) trace (with or without the RDA, respectively)
based on a flow of time in $\Cmc$ that satisfies $\p$.

The \emph{partial} (\emph{total}) \emph{$\SfiveALCOud$ formula
  satisfiability problem} (\emph{with} or \emph{without the RDA},
respectively) is the {partial} ({total}) $\MLALCOud$ formula
satisfiability problem (with or without the RDA, respectively) over
the class of \emph{epistemic frames} $(W, \sim)$ such that $\sim$ is
an \emph{equivalence relation} on $W$.

The \emph{partial} (\emph{total}) \emph{$\LTLALCOud$ formula
  satisfiability problem} (\emph{with} or \emph{without the RDA},
respectively) is the {partial} ({total}) $\TLALCOud$ formula
satisfiability problem (with or without the RDA, respectively) over
$\{ (\mathbb{N}, <) \}$.  The \emph{partial} (\emph{total})
\emph{$\LTLfALCOud$ formula satisfiability problem} (\emph{with} or
\emph{without the RDA}, respectively) is the {partial} ({total})
$\TLALCOud$ formula satisfiability problem (with or without the RDA,
respectively) over the class of finite strict linear orders of the
form $(\{ 0, \ldots, n \}, <)$, where $n \in \mathbb{N}$.

We first illustrate a polynomial-time reduction of formula
satisfiability on total modal or temporal interpretations without the
RDA, to the same problem over partial ones.
\begin{lemma}
\label{lemma:redtotaltopartial}
Total $\MLALCOud$ and $\TLALCOud$ formula satisfiability without the
RDA are polynomial-time reducible to, respectively, partial
$\MLALCOud$ and $\TLALCOud$ formula satisfiability without the RDA.
\end{lemma}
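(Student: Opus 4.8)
The plan is to conjoin to $\varphi$, for each individual name occurring in it, a formula that forces that name to denote at every state, thereby turning a partial model into a total one. Concretely, let $\NI(\varphi) = \{a_1, \ldots, a_n\}$ be the individual names occurring in $\varphi$ and set
\[
  \varphi' \;=\; \varphi \;\land\; \bigwedge_{i=1}^{n} \Box^{+}\bigl(\top \sqsubseteq \exists u. \{a_i\}\bigr).
\]
This is exactly the denotation-forcing idiom already noted above: $\top \sqsubseteq \exists u.\{a_i\}$ holds at a state iff $a_i$ denotes there, and the prefix $\Box^{+}$ propagates this to all states. The formula $\varphi'$ has size linear in $|\varphi|$ (since $n \leq |\varphi|$), so $\varphi \mapsto \varphi'$ is computable in polynomial time, and the same construction serves both the modal and the temporal case.

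For the forward direction, suppose $\varphi$ is satisfied without the RDA in a total interpretation $\Mmf$ at a state $w$. Every total interpretation is in particular a partial one, and in it every individual name denotes at every state, so each conjunct $\Box^{+}(\top \sqsubseteq \exists u.\{a_i\})$ is trivially satisfied at $w$. Hence $\Mmf, w \models \varphi'$, and $\varphi'$ is partial-satisfiable without the RDA.

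For the converse, let $\Mmf, w \models \varphi'$ for a partial interpretation $\Mmf$ without the RDA. By the second conjunct, each $a_i \in \NI(\varphi)$ denotes at every state reachable from $w$ by $\Box^{+}$: in the epistemic ($\Sfive$) case we may assume w.l.o.g.\ that $\Mmf$ consists of the single $\sim$-cluster of $w$, since restricting to that cluster yields a legitimate $\Sfive$ model preserving the truth of $\varphi'$ at $w$, and then $\Box^{+}$ at $w$ spans all states; in the linear-time case satisfaction is evaluated at the least instant $0$, so $\Box^{+}$ at $0$ already spans all instants. In either case every $a_i$ now denotes at every state. We finally obtain a total interpretation $\Mmf'$ from $\Mmf$ by choosing a fixed $d_0 \in \Delta$ and letting every individual name $a \in \NI \setminus \NI(\varphi)$ denote $d_0$ at every state, leaving all of $\NC$, $\NR$, and the names in $\NI(\varphi)$ untouched. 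Since the names in $\NI \setminus \NI(\varphi)$ do not occur in $\varphi$, a routine induction on the structure of concepts and formulas shows that the extension of every subconcept of $\varphi$ and the truth of every subformula of $\varphi$ coincide in $\Mmf$ and $\Mmf'$ at every state; in particular $\Mmf', w \models \varphi$, witnessing total satisfiability of $\varphi$ without the RDA.

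The only delicate points are that totality quantifies over \emph{all} of the infinite set $\NI$, not merely over $\NI(\varphi)$, and that the prefix $\Box^{+}$ must reach every state relevant to the satisfaction of $\varphi$. The first is dispatched by the arbitrary, truth-preserving extension to the fixed element $d_0$; the second by the two observations above, namely the globality of $\Box^{+}$ within an $\Sfive$-cluster and evaluation at the least instant $0$ in the linear case. I expect the backward direction to carry the main bookkeeping burden: one must verify that re-interpreting the non-occurring names and, in the epistemic case, trimming to a single cluster leave invariant the extension of every subconcept and the truth of every subformula of $\varphi$.
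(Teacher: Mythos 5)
Your proposal is correct and follows exactly the paper's own reduction: conjoin $\Box^{+}(\top \sqsubseteq \exists u.\{a\})$ for each individual name $a$ occurring in $\p$, then pass between total and partial models. The paper states the equivalence without detail (``it can be seen that\ldots''), whereas you additionally work out the model-completion step (interpreting non-occurring names by a fixed element) and the reason $\Box^{+}$ suffices to cover all relevant states (cluster restriction in the $\Sfive$ case, evaluation at instant $0$ in the linear case), which are precisely the points the paper leaves implicit.
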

\begin{proof}
  We adapt, to the $\MLALCOud$ and $\TLALCOud$ cases, the reduction of
  $\ALCOud$ ontology satisfiability and entailment from total to
  partial interpretation, given in~\cite{ArtEtAl21a}.
  In particular, given a $\MLALCOud$ or $\TLALCOud$ formula $\p$, we
  define $\p'$ as the conjunction of $\p$ with formulas of the form
\begin{equation}
\label{eq:designate}
\tag{$\ast$}
\Box^{+} ( \top \sqsubseteq \exists u.\{a\}),
\end{equation}
for every individual name $a$ occurring in $\p$.  It can be seen that
$\p$ is satisfiable on \emph{total} modal or temporal interpretations
without the RDA iff $\p'$ is satisfiable on \emph{partial} modal or
temporal interpretations without the RDA, respectively.  Observe that,
despite the fact that each individual name $a$ occurring in $\p$ is
forced by~(\ref{eq:designate}) to denote at every world or instant of
any model of $\p'$, its interpretation is allowed to vary across
states, due to the lack of the RDA.
\end{proof}

In addition, we show how to remove, in polynomial time, definite
descriptions of the form $\iota C$, hence reducing the original
problem to the formula satisfiability problem in $\MLALCOu$.
\begin{lemma}
\label{lemma:redmludtomlu}
Partial $\MLALCOud$ and $\TLALCOud$ formula satisfiability without the
RDA are polynomial-time reducible to, respectively, partial $\MLALCOu$
and $\TLALCOu$ formula satisfiability without the RDA.
\end{lemma}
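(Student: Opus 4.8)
The plan is to eliminate definite descriptions one at a time, innermost first, replacing each nominal of the form $\{\iota C\}$ by a nominal $\{a_{\iota C}\}$ built from a fresh individual name $a_{\iota C}$, and recording the intended meaning of $a_{\iota C}$ through auxiliary concept inclusions. Since assertions are syntactic sugar (captured by CIs) and terms occur only inside nominals, it suffices to treat occurrences of $\{\iota C\}$. Processing the innermost $\iota C$ first guarantees that, at each step, the body $C$ already belongs to $\MLALCOu$ (resp.\ $\TLALCOu$), so that after polynomially many steps all descriptions are gone. As each step introduces one fresh name together with a constant number of fixed-size axioms, the translation $\p \mapsto \p'$ is computable in polynomial time.

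Recalling that $(\iota C)^{\Imc_w}$ is defined exactly when $C^{\Imc_w}$ is a singleton, and then coincides with its unique element, I would force $a_{\iota C}$ to mimic this behaviour by conjoining the two inclusions
\[ \{a_{\iota C}\} \sqsubseteq C, \qquad C \sqcap \exists u.\{a_{\iota C}\} \sqsubseteq \{a_{\iota C}\}, \]
asserted at every state. Globality is obtained by prefixing the corresponding CIs with $\Box^{+}$, which, over $\mathbf{S5}$ frames and over $(\mathbb{N},<)$ evaluated at $0$, reaches every world resp.\ instant. Together these say that if $a_{\iota C}$ denotes at $w$ then $C^{\Imc_w} = \{a_{\iota C}^{\Imc_w}\}$ is a singleton; equivalently, whenever $C^{\Imc_w}$ fails to be a singleton, $a_{\iota C}$ does not denote and $\{a_{\iota C}\}^{\Imc_w} = \emptyset$, matching $\{\iota C\}^{\Imc_w}$. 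For soundness, given $\Mmf \models \p$ I would set $a_{\iota C}^{\Imc_w} := (\iota C)^{\Imc_w}$ whenever the latter is defined and leave $a_{\iota C}$ undenoting otherwise; exploiting the absence of the RDA this assignment may vary freely across states, it validates both inclusions, and it makes $\{a_{\iota C}\}$ and $\{\iota C\}$ coincide at every state, so that the resulting interpretation satisfies $\p'$.

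For the converse, which is the delicate point, from a model $\Mmf'$ of $\p'$ I would read off an interpretation $\Mmf$ of $\p$ agreeing with $\Mmf'$ on all original symbols and interpreting $\iota C$ by the free-logic clause. The inclusions guarantee that at every state where $\{a_{\iota C}\}$ is non-empty it equals the correct singleton $\{\iota C\}$; the only mismatch can arise at a state where $C^{\Imc'_w}$ happens to be a singleton while $a_{\iota C}$ was left undenoting, so that $\{a_{\iota C}\}^{\Imc'_w} = \emptyset$ but $\{\iota C\}^{\Imc_w} \neq \emptyset$. The main obstacle is precisely to rule out these spurious singletons, since neither $\MLALCOu$ nor $\TLALCOu$ can assert that a concept has at most one instance. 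I would resolve it by transforming the model, adapting the construction for $\ALCOud$ in~\cite{ArtEtAl21a}: anonymous witnesses are duplicated so that any would-be singleton whose element is not pinned down by a nominal is destroyed, while for elements that the nominal structure forces to be singletons one instead lets $a_{\iota C}$ denote them. The genuinely new difficulty relative to the non-modal case is that, under the constant-domain assumption, this surgery must be performed simultaneously and coherently at all states of $\Mmf'$ and must respect the non-rigid reading of the fresh names; checking that it preserves satisfaction of $\p'$ and yields $\Mmf \models \p$ is the crux. The same scheme applies verbatim to $\TLALCOud$, with $\Box^{+}$ read temporally.
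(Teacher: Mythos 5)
Your high-level route (replace each innermost $\defdes C$ by a fresh name $a_{\defdes C}$, axiomatize its behaviour, repair models in the backward direction) is the same one the paper takes, since the paper's proof just defers to the reduction of~\cite[Lemma~1]{ArtEtAl21a}. But your axiomatization captures only half of the semantics of $\defdes$, and this half is not enough: your two CIs say that \emph{if} $a_{\defdes C}$ denotes \emph{then} $C$ is the singleton $\{a_{\defdes C}\}$, and they are vacuous whenever $a_{\defdes C}$ is left undenoting. This makes the translation itself unsound, not merely its correctness proof incomplete. Concretely, take
\[
\p \;=\; (\top \sqsubseteq \exists u.\{b\}) \,\land\, (\top \sqsubseteq \lnot \exists u.\{\defdes \{b\}\}).
\]
This $\p$ is unsatisfiable: the first conjunct forces $b$ to denote, so $\{b\}$ is a singleton, so $\defdes\{b\}$ denotes, contradicting the second conjunct. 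Yet your translation, namely $(\top \sqsubseteq \exists u.\{b\}) \land (\top \sqsubseteq \lnot\exists u.\{a\})$ conjoined with $\Box^{+}(\{a\} \sqsubseteq \{b\})$ and $\Box^{+}(\{b\} \sqcap \exists u.\{a\} \sqsubseteq \{a\})$, is satisfied in a one-world, one-element interpretation where $b$ denotes and $a$ is undefined (both auxiliary CIs hold vacuously because $\exists u.\{a\}$ and $\{a\}$ are empty). So the translated formula is satisfiable while $\p$ is not, and no model surgery can rescue the backward direction: your own prescription for nominal-pinned singletons would let $a$ denote $b$'s value, but that falsifies the translated matrix, and since $\p$ has no models at all, the preservation step you flag as ``the crux'' is provably impossible here. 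Note that this counterexample uses no modal operators, so the gap is already present in the non-modal $\ALCOud$-to-$\ALCOu$ reduction you are adapting.

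The missing ingredient is the converse direction of the $\defdes$ semantics in the one case where it \emph{is} expressible, namely when the singleton is pinned by a name: for each description $\defdes C$ and each individual name $b$ occurring in the (translated) formula, one needs axioms of the shape
\[
\Box^{+}\bigl(\{b\} \sqcap C \sqcap \forall u.(C \Rightarrow \{b\}) \sqsubseteq \{a_{\defdes C}\}\bigr),
\]
saying that if $C$ is exactly the singleton $\{b\}$, then $a_{\defdes C}$ must denote that element; this keeps the reduction polynomial and makes your counterexample's translation unsatisfiable, as required. Only with these pinning axioms in place does the model-repair argument have a chance: duplication of anonymous elements destroys unpinned singletons, the pinning axioms handle named ones, and the genuinely modal work (which the paper leaves implicit in its appeal to~\cite{ArtEtAl21a}) is the residual case created by non-rigidity, i.e., an element that is named at one state but anonymous — and hence a potential spurious singleton — at another.
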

\begin{proof} Similar to the proof of~\cite[Lemma~1]{ArtEtAl21a},
  which shows that $\ALCOud$ formula satisfiability problem on partial
  interpretations is polynomial-time reducible to the $\ALCOu$ formula
  satisfiability problem on partial interpretations.
\end{proof}

Finally, given a formula $\p$ in $\MLALCOu$ or $\TLALCOu$, checking
for its satisfiability reduces to check concept satisfiability in
$\MLALCOu$ or $\TLALCOu$, respectively.
\begin{lemma}
\label{lemma:redformtoconcuniv}
Partial $\MLALCOu$ and $\TLALCOu$ formula satisfiability are
linear-time reducible to, respectively, partial $\MLALCOu$ and
$\TLALCOu$ concept satisfiability.
\end{lemma}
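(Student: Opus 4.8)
The plan is to internalise whole formulas as concepts by means of the universal role $u$, exploiting the fact, already recalled in the preceding observation, that a concept inclusion $C \sqsubseteq D$ holds at a world $w$ exactly when the concept $\forall u.(C \Rightarrow D)$ is interpreted there as the whole domain. First I would eliminate assertions: since in $\MLALCOu$ and $\TLALCOu$ the only terms are individual names, every $C(\tau)$ and $r(\tau_1,\tau_2)$ can be rewritten into the equivalent CIs given earlier, so that every axiom occurring in the input formula $\p$ is a CI.

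Next I would define, by recursion on the structure of $\p$, a concept $C_{\p}$ by
\[
C_{(C \sqsubseteq D)} = \forall u.(\lnot C \sqcup D), \qquad
C_{\lnot \p} = \lnot C_{\p}, \qquad
C_{\p \land \psi} = C_{\p} \sqcap C_{\psi},
\]
together with $C_{\Diamond \p} = \Diamond C_{\p}$ in the epistemic case, respectively $C_{\p \Until \psi} = C_{\p} \Until C_{\psi}$ in the temporal case. This map is computed by a single traversal of the parse tree of $\p$, replacing each node by a gadget of bounded size, so $C_{\p}$ has size linear in $\p$ and the reduction runs in linear time.

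The core step is an induction on $\p$ establishing a faithfulness invariant: for every partial interpretation $\Mmf$ and every world or instant $w$, the concept $C_{\p}$ is \emph{global}, meaning $C_{\p}^{\Imc_{w}} = \Delta$ whenever $\Mmf, w \models \p$ and $C_{\p}^{\Imc_{w}} = \eset$ otherwise. The base case uses that $u^{\Imc_{w}} = \Delta \times \Delta$, so that $(\forall u.(\lnot C \sqcup D))^{\Imc_{w}} = \Delta$ iff $C^{\Imc_{w}} \subseteq D^{\Imc_{w}}$, and $\eset$ otherwise. The Boolean cases are immediate, since $\lnot$ and $\sqcap$ send $\{\Delta, \eset\}$-valued concepts to $\{\Delta, \eset\}$-valued concepts in agreement with the propositional connectives. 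For $\Diamond$ and $\Until$ the crucial point is that, once $C_{\p}$ and $C_{\psi}$ take only the values $\Delta$ or $\eset$ at each state, membership $d \in C_{\p}^{\Imc_{v}}$ no longer depends on $d$, so the element-wise concept semantics of $\Diamond$ and of $\Until$ collapses to the state-wise formula semantics of the same operators. Since $\Delta \neq \eset$, it follows that $\p$ is satisfied at $w$ of $\Mmf$ iff $C_{\p}$ is satisfied at $w$ of $\Mmf$, whence $\p$ is partially satisfiable iff $C_{\p}$ is partially satisfiable; in the temporal case the distinguished instant $0$ is preserved verbatim by the translation, so satisfiability at $0$ transfers as well.

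I expect no deep obstacle here: the entire content is the faithfulness invariant, and the only point that genuinely requires care is verifying that the global-concept property is preserved inside the scope of the modal and temporal constructors. This hinges on the universal role being interpreted world-by-world as $\Delta \times \Delta$ under the constant-domain assumption, which keeps internalised axioms $\{\Delta,\eset\}$-valued within $\Diamond$ and $\Until$; once this is in place, the equivalence of the two satisfiability problems follows directly.
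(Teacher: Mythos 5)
Your proposal is correct and follows essentially the same route as the paper, which proves this lemma by appeal to the internalisation property of DLs with the universal role: a CI $C \sqsubseteq D$ becomes $\forall u.(\lnot C \sqcup D)$, and Boolean, $\Diamond$, and $\Until$ operators on formulas are mapped homomorphically to the corresponding concept constructors. Your faithfulness invariant (internalised formulas are $\{\Delta,\eset\}$-valued at every state, so the element-wise semantics of $\Diamond$ and $\Until$ collapses to the state-wise one) is exactly the detail the paper leaves implicit, and you verify it correctly.
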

\begin{proof}
  This is an adaptation to the modal and temporal case of the
  so-called \emph{internalisation}
  property~\cite[Chapter~5]{BaaEtAl03a} of those DLs equipped with the
  universal role, allowing us to internalise a CI of the form
  $C\sqsubseteq D$ as a concept of the form
  $\forall u.(\lnot C \sqcup D)$.
\end{proof}


\section{Motivations and examples}
\label{sec:motivation}
In the following, we provide some motivating examples to illustrate
the novel features of the modal and temporal free DL
languages introduced in the previous section.

\paragraph{Epistemic scenario.}
We discuss here the main features of the language $\MLALCOud$
interpreted on partial modal interpretations without the RDA and based
on frames
equipped with an
equivalence relation,
to provide motivating examples in an epistemic
setting. Thus, in the following, we will refer to such interpretations
as \emph{epistemic contexts}, and the worlds of the epistemic context
will be called \emph{states}.

Let us consider the following individual names: $\mathsf{clark}$ standing
for the `Clark Kent', $\mathsf{superman}$ standing for `Superman', and
$\mathsf{lois}$ standing for `Lois Lane'. In a given state of an epistemic
context, $\mathsf{clark}$ and $\mathsf{superman}$ can denote the same element
of the domain. However, since the epistemic context does not satisfy
the RDA, these individual names can also refer to different
individuals at different states. Moreover, the term
$\defdes ( \mathsf{Journalist} \sqcap \exists
\mathsf{worksWith}.\{\mathsf{lois}\})$ can be used to refer to the
journalist that works with Lois Lane, which can be understood as an
alternative way to describe Clark Kent.

The concept $\Box\{\mathsf{superman}\}$ captures those individuals that are
\emph{known to be} Superman, whereas $\Diamond\{\mathsf{superman}\}$ would
be the set of individuals \emph{suspectable of being} (i.e., for which
it is not known that they are not) Superman.
%
The following CI can be used to express that Clark Kent is known to be
Clark Kent: $\{ \mathsf{clark} \} \sqsubseteq \Box \{ \mathsf{clark} \}$,
whereas we can use
$\{ \mathsf{clark} \} \sqsubseteq \lnot \Box \{ \mathsf{superman} \}$ to
express that Clark Kent is not known to be Superman. Moreover, the
following CIs assert that Lois Lane is known to be Lois Lane, i.e.,
$\{ \mathsf{lois} \} \sqsubseteq \Box \{ \mathsf{lois} \} $, and she is
also known to love Superman, i.e.,
$\{ \mathsf{lois} \} \sqsubseteq \Box \exists
\mathsf{loves}.\{\mathsf{superman}\}$,
but (unfortunately for Clark Kent)
not the journalist that works with
her, i.e., $\{\mathsf{lois}\} \sqsubseteq \lnot \Box \exists \mathsf{loves}.\{
\defdes ( \mathsf{Journalist} \sqcap \exists
\mathsf{worksWith}.\{\mathsf{lois}\}\}$.



\paragraph{Temporal scenario.}
In the following, we exemplify and discuss the main features of the
language $\TLALCOud$ interpreted on partial traces without the RDA.

Let us consider the individual name $\mathsf{dl}$ which can serve as a
``variable''-name, that might change its value (and even remain
uninterpreted) at different instants, so to denote, e.g., the
Description Logic (DL) workshops that take place over the years.
Instead, the individual name $\mathsf{dl22}$ can be used as an
individual name with a fixed referent across time, denoting the DL
workshop that takes place in 2022 only. Finally, the definite
description $\defdes\exists \mathsf{isGCof}.\{\mathsf{dl}\}$ is
able to refer non-rigidly to the General Chair of the DL workshop.

The concept $\Diamond^{+} \{\mathsf{dl}\}$ can be used to refer to
those objects that will eventually be the DL workshop. When
interpreted on total traces with the RDA, the extension of this
concept would contain exactly one object, i.e., the (rigid) referent
of the individual name $\mathsf{dl}$. On the other hand, on partial
traces without the RDA, its extension might contain also zero, or more
than one, objects, namely, those individuals that are the values of
$\mathsf{dl}$ at different time points. In addition, the concept
$\Diamond^{+}\{\defdes\exists\mathsf{isGCof}.\{\mathsf{dl}\}\}$ can be
used to represent those individuals that will eventually be the
General Chair of the various DL workshops.
The concept
$\Diamond^{+} \exists \mathsf{hasPCM}.\Diamond^{+} \{\defdes \exists
\mathsf{isGCof}.\{\mathsf{dl}\}\}$ captures those objects that will
eventually have as PC Members some individuals that, at some point in
the future, will be the General Chair of the DL workshop.

Let $\mathsf{proc\mbox{-}dl22}$ stands for the proceedings of the DL22
workshop. To enforce the proceedings of the DL22 to behave as a rigid
designator, we can enforce the following CI,
$\Diamond^{+} \{\mathsf{proc\mbox{-}dl22}\} \sqsubseteq
\Box^+\{\mathsf{proc\mbox{-}dl22}\}$.
%
On the other hand, we can require $\mathsf{dl}$ to be a
\emph{flexible} designator by enforcing the following CI,
$\top \sqsubseteq \exists u.(\Diamond^{+}\{\mathsf{dl}\} \sqcap
\Diamond^{+} \lnot \{ \mathsf{dl}\})$.
%
Finally, we can enforce $\mathsf{dl22}$ to be an \emph{instantaneous}
designator by using the following CIs,
$\top \sqsubseteq \Diamond^{+}\exists u.\{\mathsf{dl22}\}$,
$\Diamond^{+} (\{\mathsf{dl22}\} \sqcap
\Diamond\{\mathsf{dl22}\})\sqsubseteq \bot$, meaning that
$\mathsf{dl22}$ is not gost and that no objects can be the denotation
of the individual name $\mathsf{dl22}$ at two distinct instants.
Notice that gost individuals are assumed to be (vacuously) rigid,
while flexible and instantaneous individuals should denote.

To state that the (current) General Chair of DL will be one of next
year's PC Members of DL we can use the assertion,
$(\Next \exists {\sf isPCMof}.\{{\sf dl}\})(\{\defdes\exists {\sf
  isGCof}.\{ {\sf dl }\}\})$.
%
On the other hand, to say that, next year, the (future) General Chair
of DL will be one of the PC Members of DL, the following formula
should be used instead,
$\Next (\top \sqsubseteq \exists u.\{\defdes \exists {\sf isGCof}.
\{{\sf dl}\}\} \land \{\defdes \exists {\sf isGCof}.\{ {\sf dl}\}\}
\sqsubseteq \exists{\sf isPCMof}.\{{\sf dl}\})$.
Moreover, one can express that, at some point, DL22 will be the DL
workshop, by writing the following assertion,
$\Diamond^{+} (\{\mathsf{dl}\}(\{\mathsf{dl22}\}))$. Finally, the
formula
$\Box^{+}(\{\defdes\exists{\sf isGCof}.\{{\sf dl}\}\}
\sqsubseteq \exists{\sf isPCMof}.\Next\{{\sf dl}\})$ asserts
that it will always be the case that the General Chair of DL is going
to be a DL's PC Member on the subsequent year.
        

\section{Reasoning in epistemic free description logics}
\label{sec-reasoning}

In this section, we show that the formula satisfiability problem for
the epistemic free DL $\SfiveALCOud$
is
decidable and $\NExpTime$-complete.
%
%
After reducing, thanks to Lemmas~\ref{lemma:redmludtomlu} and~\ref{lemma:redformtoconcuniv}, satisfiability of a formula $\p$ in $\SfiveALCOud$ to
the satisfiability of a concept $C_\p$ in $\SfiveALCOu$, we can show
how to decide the satisfiability of the latter by adapting the technique of
\emph{quasimodels}~\cite{GabEtAl03} to the case where individual names
can be left uninterpreted and do not respect the RDA.

Given an
$\MLALCOu$
concept $C_\p$, let
$\con{C_\p}$ be the closure under single negation of the set of
concepts occurring in $C_\p$.
A \emph{type} for $C_\p$ is a subset $\contp$ of $\con{C_\p}$
such that:
\begin{enumerate}
  [label=\textbf{C\arabic*},leftmargin=*,series=run]
\item
  $\neg C \in \contp$ iff $C\not \in \contp$, for all
  $\neg C \in \con{C_\p}$;
  \label{ct:neg}
\item
  $C \sqcap D \in \contp$ iff $C, D \in \contp$, for all
  $C \sqcap D \in \con{C_\p}$.
  \label{ct:con}
\end{enumerate}
Note that there are at most exponentially many types, i.e., there are
$2^{|\con{C_\p}|}$ types for $C_\p$.

A \emph{quasistate for $C_\p$} is a non-empty set $\settp$ of types
for $C_\p$ satisfying the following conditions:
\begin{enumerate}
  [label=\textbf{Q\arabic*},leftmargin=*,series=run]
%
\item for every $\{a\} \in \con{C_\p}$, there exists at most one
  $\contp \in \settp$ such that $\{ a \} \in \contp$;
  \label{qs:nom}
\item for every $\contp \in \settp$ and every
  $\exists r.C \in \contp$, there exists $\contp' \in \settp$ such
  that $\{ \lnot D \mid \lnot \exists r. D \in \contp \} \cup \{ C \}
  \subseteq \contp'$;
  \label{qs:existsr}
\item for every $\contp \in \settp$, $\exists u.C \in \contp$ iff
  there exists $\contp' \in \settp$ such that $C \in \contp'$.
  \label{qs:univ}
\end{enumerate}

A \emph{basic structure for $C_\p$} is a pair $(W, \funcand)$, where
$W$ is a non-empty set, and $\funcand$ is a function associating with
every $w \in W$ a quasistate $\funcand(w)$ for $C_\p$, satisfying the
following condition:
\begin{enumerate}
  [label=\textbf{B\arabic*},leftmargin=*,series=run]
  \item there exists a world $w'\in W$ and a type $\contp \in
    \funcand(w')$ such that $C_\p \in \contp$.
    \label{b1}
\end{enumerate}
  
A \emph{run through $(W, \funcand)$} is a function $\rho$ mapping each
world $w \in W$ into a type $\rho(w) \in \funcand(w)$ and satisfying
the following condition for every $\Diamond C \in \con{\p}$:
\begin{enumerate}
[label=\textbf{R\arabic*},leftmargin=*,series=run]
\item $\Diamond C \in \rho(w)$ iff there exists $v \in W$ such that
  $C \in \rho(v)$.
  \label{rn:modal}
\end{enumerate}

An \emph{$\SfiveALCOu$ quasimodel for $C_\p$} is a triple $\quasimod =
(W, \funcand, \runs)$, where $(W, \funcand)$ is a basic structure for
$C_\p$, and $\runs$ is a set of runs through $(W, \funcand)$ such that
the following condition holds:
\begin{enumerate}
[label=\textbf{M\arabic*},leftmargin=*,series=run]
\item for every $w \in W$ and every $\contp \in \funcand(w)$, there
  exists $\rho \in \runs$ with $\rho(w) = \contp$;
  \label{run:exists}
\item for every $w\in W$ and for every
  $\contp\in \funcand(w), \text{ with } \{a\}\in \contp$, there exists
  exactly one $\rho\in \runs$ such that $\{a\} \in \rho(w)$.
    \label{run:nominal}
\end{enumerate}

We are now able to show how a quasimodel is related to the notion of
satisfiability.
\begin{proposition}\label{th-s5sat}
A
  concept $C_\p$ is
  partial
  $\SfiveALCOu$
  satisfiable
   without the RDA
  iff there exists an $\SfiveALCOu$ quasimodel for $C_\p$.
\end{proposition}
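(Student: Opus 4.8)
The plan is to prove the two directions separately, establishing a tight correspondence between partial \SfiveALCOu interpretations (without the RDA) and quasimodels, in each case via a \emph{truth lemma} relating membership in concept extensions to membership in types.

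For the \emph{soundness} direction ($\Leftarrow$), given a quasimodel $\quasimod = (W, \funcand, \runs)$ I would build a partial interpretation $\Mmf = (\Fmf, \Delta, \Int)$ over the $\mathbf{S5}$ frame $\Fmf = (W, \sim)$ with $\sim$ the total relation $W \times W$ (a single cluster, which is what \ref{rn:modal} already models), taking as domain $\Delta = \runs$ the set of runs. At each world $w$, I set $A^{\Imc_w} = \{ \rho \in \runs \mid A \in \rho(w) \}$ for $A \in \NC$; for $r \in \NR$, I put $(\rho, \rho') \in r^{\Imc_w}$ iff $\neg D \in \rho'(w)$ for every $\lnot \exists r. D \in \rho(w)$; and $u^{\Imc_w} = \Delta \times \Delta$. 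For an individual name $a$, I use \ref{qs:nom} and \ref{run:nominal} to let $a^{\Imc_w}$ be the unique run $\rho$ with $\{a\} \in \rho(w)$ whenever some type of $\funcand(w)$ contains $\{a\}$, leaving $a^{\Imc_w}$ undefined otherwise; this yields both the partiality and, crucially, the non-rigidity, since the witnessing run may differ across worlds. The core is a truth lemma stating $\rho \in C^{\Imc_w}$ iff $C \in \rho(w)$, proved by induction on $C$: the Boolean cases use \ref{ct:neg} and \ref{ct:con}; the $\exists r.C$ case uses \ref{qs:existsr} together with \ref{run:exists} to realise the required successor type by an actual run; the $\exists u.C$ case uses \ref{qs:univ} and \ref{run:exists}; the $\Diamond C$ case uses \ref{rn:modal} and the fact that $\sim$ connects all worlds; and the nominal case $\{a\}$ follows because \ref{run:nominal} forces $\{ \rho \mid \{a\} \in \rho(w) \}$ to be a singleton matching $a^{\Imc_w}$. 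Satisfiability of $C_\p$ then follows from \ref{b1} and \ref{run:exists}, which provide a run $\rho$ and world $w'$ with $C_\p \in \rho(w')$, whence $\rho \in C_\p^{\Imc_{w'}} \neq \eset$.

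For the \emph{completeness} direction ($\Rightarrow$), starting from a partial interpretation $\Mmf$ with $C_\p^{\Imc_{w_0}} \neq \eset$, I would first restrict the frame to the $\sim$-equivalence class of $w_0$, so that $\sim$ becomes the total relation on the remaining worlds; this is legitimate in $\mathbf{S5}$, since $\Diamond$ only reaches within that class. Then, for each $d \in \Delta$ and world $w$, I define the type of $d$ at $w$ as the set of all concepts from $\con{C_\p}$ that $d$ satisfies at $w$, let $\funcand(w)$ be the collection of all such types, and set $\runs = \{ \rho_d \mid d \in \Delta \}$ where $\rho_d(w)$ is the type of $d$ at $w$. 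Conditions \ref{ct:neg}--\ref{ct:con} hold by the semantics of $\lnot$ and $\sqcap$; \ref{qs:nom} holds because $\{a\}^{\Imc_w}$ has at most one element, so only the type of $a^{\Imc_w}$ can contain $\{a\}$; \ref{qs:existsr} is witnessed by the type of an $r$-successor realising the existential, which avoids every $D$ with $\exists r. D$ absent from the current type; \ref{qs:univ} follows from $u^{\Imc_w} = \Delta \times \Delta$; \ref{b1} holds at $w_0$; \ref{rn:modal} uses totality of $\sim$; and \ref{run:exists}--\ref{run:nominal} are immediate from the definition of $\runs$ and, again, the singleton nature of nominals.

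The main obstacle I anticipate is the correct handling of nominals under the combined \emph{partial} and \emph{non-rigid} semantics: I must ensure that each $\{a\}$ is interpreted as a set of size at most one, consistently throughout the inductive truth lemma, while still permitting $a$ to denote different runs, or nothing at all, at different worlds. This is precisely what \ref{qs:nom} and \ref{run:nominal} are designed to guarantee, and verifying that they deliver a well-defined, possibly undefined, value $a^{\Imc_w}$ in the soundness direction, and conversely that partiality and non-rigidity of $\Mmf$ are faithfully recorded by the quasimodel, is the delicate part; by contrast, the role construction via \ref{qs:existsr} and \ref{run:exists} is standard, though it must still be checked to be both sound and complete for $\exists r. C$ in the two directions.
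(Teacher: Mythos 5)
Your proposal is correct and follows essentially the same route as the paper's proof: both directions use the identical constructions (runs as domain elements, the same definitions of $A^{\Imc_w}$, $r^{\Imc_w}$, $u^{\Imc_w}$, and the partial, non-rigid $a^{\Imc_w}$ via \ref{qs:nom} and \ref{run:nominal}), the same inductive truth lemma with the same condition-by-condition justifications, and the same reduction to a single $\mathbf{S5}$ cluster (your explicit restriction to the equivalence class of $w_0$ is just the standard argument behind the paper's ``without loss of generality'' step).
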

\ifappendix
\begin{proof}
$(\Rightarrow)$
Let $\Mmf = (\Fmf, \Delta, \Imc)$, with
  $\Fmf = (W, \sim)$,
  be a partial modal interpretation
  satisfying $C_\p$.
 Without loss of generality~\cite{GabEtAl03}, we can assume that $\sim \ = W \times W$.
  Consider
  $\contp^{\Imc_{w}}(d) = \{ C \in \con{C_\p} \mid d \in
  C^{\Imc_{w}}\}$, for every $d \in \Delta$ and $w \in W$. Clearly,
  $\contp^{\Imc_{w}}(d)$ is a concept type for $C_\p$ since it clearly
  satisfies \ref{ct:neg}-\ref{ct:con}. We now define a
  triple $\Qmf = (W, \funcand, \runs)$, where:
  \begin{itemize}
  \item $\funcand$ is a function from $W$ to the set of
    quasistates for $C_\p$ such that
    $\funcand(w) = \{ \contp^{\Imc_{w}}(d) \mid d \in \Delta\}$, for
    every $w\in W$;
  \item $\runs$ is the set of functions $\rho_{d}$ from $W$ to
    the set of types for $C_\p$ such that
    $\rho_{d}(w) = \contp^{\Imc_{w}}(d)$, for every
    $d \in \Delta$ and for every $w\in W$.
   \end{itemize}
   It is easy to show that $\Qmf$ is a quasimodel for $C_\p$. Indeed,
   $\funcand$ is well-defined, as $\funcand(w)$ is a set of types for
   $C_\p$ satisfying~\ref{qs:nom}-\ref{qs:univ}, for every $w \in
   W$. Moreover, $(W, \funcand)$ is a basic structure for $C_\p$ since
   $\Mmf$ is satisfying $C_\p$, i.e., $\Mmf,w\models C_\p(d)$, for
   some $w\in W$ and $d\in\Delta$, and thus $(W, \funcand)$
   satisfies~\ref{b1}. The set of runs, $\runs$, by construction,
   satisfies~\ref{rn:modal}. Finally, by definition of $\funcand(w)$
   and of $\rho$, $\Qmf$ satisfies~\ref{run:exists}, and, since for
   every $w\in W$, if $\{a\}\in \contp$, then there should be exactly
   one $d\in\Delta$ such that $d=\{a\}^{\Imc_{w}}$, thus $\Qmf$
   satisfies~\ref{run:nominal}.

   \noindent $(\Leftarrow)$ Suppose there is a quasimodel
   $\Qmf = (W, \funcand, \runs)$ for $C_\p$. Define a partial modal
   interpretation $\Mmf = (\Fmf, \Delta, \Imc)$, with
   $\Fmf = (W, W \times W)$, $\Delta = \runs$, and, for any $A\in\NC$,
   $r\in\NR$ and $a\in\NI$ the following holds:
   \begin{itemize}
   \item $A^{\Imc_{w}} = \{ \rho \in \Delta \mid A \in \rho(w) \}$;
   \item
     $r^{\Imc_{w}} = \{ (\rho, \rho') \in \Delta \times \Delta \mid \{
     \lnot C \mid \lnot \exists r . C \in \rho(w) \} \subseteq
     \rho'(w) \}$;
   \item $u^{\Imc_{w}} = \Delta \times \Delta$;
   \item $a^{\Imc_{w}} = \rho$, for the (unique, if any)
     $\rho \in \runs$ such that $\{ a \} \in \rho(w)$ (undefined,
     otherwise).
   \end{itemize}
   Observe that $\Delta$ is well-defined since $W$ is a non-empty set
   and $\funcand(w)\neq\emptyset$, for all $w\in W$. Thus,
   by~\ref{run:exists}, $\runs\neq\emptyset$. Also,
   by~\ref{qs:nom} and~\ref{run:nominal}, $a^{\Imc_{w}}$ is
   well-defined. We now require the following claim.
   \begin{claim}\label{cla:qmcon}
     For every $C \in \con{C_\p}$, $w \in W$ and $\rho \in \Delta$,
     $\rho \in C^{\Imc_{w}}$ iff $C \in \rho(w)$.
   \end{claim}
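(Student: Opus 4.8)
The plan is to prove the Claim by \emph{structural induction} on the concept $C \in \con{C_\p}$, reading the statement as holding simultaneously for \emph{all} worlds $w \in W$ and all runs $\rho \in \Delta = \runs$; keeping $w$ universally quantified inside the induction hypothesis is essential, because the modal case invokes the hypothesis at worlds other than $w$. For the base case $C = A$ (a concept name), the equivalence $\rho \in A^{\Imc_{w}}$ iff $A \in \rho(w)$ is immediate from the definition $A^{\Imc_{w}} = \{ \rho \mid A \in \rho(w) \}$. For the base case $C = \{a\}$ (a nominal), I would unfold $\{a\}^{\Imc_{w}} = \{ a^{\Imc_{w}} \}$ when $a$ denotes in $\Imc_{w}$ and $\eset$ otherwise, and recall that $a^{\Imc_{w}}$ was defined as the \emph{unique} $\rho' \in \runs$ with $\{a\} \in \rho'(w)$; existence and uniqueness of this run are exactly what \ref{qs:nom} and \ref{run:nominal} guarantee, so $\rho \in \{a\}^{\Imc_{w}}$ holds iff $\rho$ is that run, i.e.\ iff $\{a\} \in \rho(w)$.

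For the Boolean cases I would appeal directly to the type conditions: \ref{ct:neg} settles $C = \lnot D$ (noting $\lnot D \in \con{C_\p}$ by closure under single negation) and \ref{ct:con} settles $C = D \sqcap E$, each combined with the induction hypothesis on the immediate subconcepts. The case $C = \exists u.D$ uses $u^{\Imc_{w}} = \Delta \times \Delta$: from left to right, a witness $\rho'$ with $\rho' \in D^{\Imc_{w}}$ gives $D \in \rho'(w)$ by the hypothesis, and since $\rho'(w) \in \funcand(w)$, condition \ref{qs:univ} forces $\exists u.D \in \rho(w)$; from right to left, \ref{qs:univ} supplies a type $\contp' \in \funcand(w)$ with $D \in \contp'$, \ref{run:exists} lifts it to a run $\rho'$ with $\rho'(w) = \contp'$, and the hypothesis yields $\rho' \in D^{\Imc_{w}}$, whence $\rho \in (\exists u.D)^{\Imc_{w}}$.

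The main obstacle is the existential-role case $C = \exists r.D$, where the definition of $r^{\Imc_{w}}$ and the quasistate condition \ref{qs:existsr} must interact correctly. For the right-to-left direction I would invoke \ref{qs:existsr} to obtain $\contp' \in \funcand(w)$ with $\{ \lnot E \mid \lnot \exists r.E \in \rho(w) \} \cup \{ D \} \subseteq \contp'$, then use \ref{run:exists} to realise $\contp'$ as a run $\rho'$; by construction $(\rho,\rho') \in r^{\Imc_{w}}$ and $D \in \rho'(w)$, so the hypothesis gives $\rho' \in D^{\Imc_{w}}$ and hence $\rho \in (\exists r.D)^{\Imc_{w}}$. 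The delicate direction is left-to-right: given $\rho'$ with $(\rho,\rho') \in r^{\Imc_{w}}$ and $\rho' \in D^{\Imc_{w}}$, the hypothesis yields $D \in \rho'(w)$; if $\exists r.D \notin \rho(w)$ then $\lnot \exists r.D \in \rho(w)$ by \ref{ct:neg}, so the definition of $r^{\Imc_{w}}$ would put $\lnot D \in \rho'(w)$, contradicting $D \in \rho'(w)$ via \ref{ct:neg} applied to the type $\rho'(w)$. Finally, the modal case $C = \Diamond D$ is where quantifying over all worlds pays off: under the (w.l.o.g.) assumption $\sim\, = W \times W$ every $v$ is accessible, so $\rho \in (\Diamond D)^{\Imc_{w}}$ iff $\rho \in D^{\Imc_{v}}$ for some $v \in W$, iff $D \in \rho(v)$ for some $v$ by the hypothesis at $v$, iff $\Diamond D \in \rho(w)$ by \ref{rn:modal}. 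With the Claim established, the $(\Leftarrow)$ direction closes by using \ref{b1} and \ref{run:exists} to pick a world $w'$ and a run $\rho$ with $C_\p \in \rho(w')$, so that $\rho \in C_\p^{\Imc_{w'}}$ and $\Mmf$ satisfies $C_\p$.
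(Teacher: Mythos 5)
Your proof is correct and follows essentially the same route as the paper's: structural induction on $C$ with $w$ quantified inside the hypothesis, the same contradiction argument via \ref{ct:neg} and the definition of $r^{\Imc_{w}}$ for the left-to-right direction of $\exists r.D$, and the same use of \ref{qs:existsr}, \ref{run:exists}, \ref{qs:univ}, and \ref{rn:modal} elsewhere. The only differences are cosmetic: you spell out the nominal base case via \ref{qs:nom} and \ref{run:nominal} where the paper calls it immediate, and the totality of the accessibility relation in the $\Diamond$ case holds by construction of $\Fmf = (W, W \times W)$ rather than as a w.l.o.g.\ assumption.
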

   \begin{proof}
     The proof is by induction on $C$. The base cases, $C = A$ and $C =
     \{a\}$, follow immediately from the definition of $\Mmf$. We then
     consider the inductive cases.\\
     Let $C = \lnot D$. $\lnot D\in \rho(w)$ iff, by~\ref{ct:neg},
     $D\not\in \rho(w)$. By induction, $D\not\in \rho(w)$ iff
     $\rho\not\in D^{\Imc_{w}}$ iff $\rho\in(\lnot D)^{\Imc_{w}}$.\\
     Let $C = D\sqcap E$. Similar to the previous case, now by
     using~\ref{ct:con}.\\
     Let $C = \exists u.D$. $\rho \in (\exists u.D)^{\Imc_{w}}$ iff
     there exists $\rho' \in D^{\Imc_{w}}$. By inductive hypothesis,
     $\rho' \in D^{\Imc_{w}}$ iff $D \in \rho'(w)$. By~\ref{qs:univ},
     the previous step holds iff $\exists u.D \in \rho(w)$.\\
     Let $C = \exists r.D$. $(\Rightarrow)$ Suppose that
     $\rho\in (\exists r.D)^{\Imc_{w}}$. Then, there exists
     $\rho' \in \Delta$ such that $(\rho, \rho') \in r^{\Imc_{w}}$ and
     $\rho' \in D^{\Imc_{w}}$. By inductive hypothesis,
     $\rho' \in D^{\Imc_{w}}$ iff $D\in\rho'(w)$. By contradiction,
     assume that $\exists r.D\not\in\rho(w)$, then, by~\ref{ct:neg},
     $\lnot \exists r.D\in\rho(w)$. By definition of $r^{\Imc_{w}}$,
     since $(\rho,\rho')\in r^{\Imc_{w}} $, then,
     $\lnot D\in \rho'(w)$, thus contadicting, by~\ref{ct:neg}, that
     $D\in \rho'(w)$.
     $(\Leftarrow)$ Conversely, suppose that
     $\exists r.D \in \rho(w)$. By~\ref{qs:existsr}
     and~\ref{run:exists}, there exists a $\rho' \in \Delta$ such that
     $\{ \lnot E \mid \lnot \exists r . E \in \rho(w) \} \cup \{ D \}
     \subseteq \rho'(w)$. By inductive hypothesis and the definition
     of $r^{\Imc_{w}}$, $\rho' \in  D^{\Imc_{w}}$ and $(\rho, \rho')
     \in r^{\Imc_{w}}$. Thus, $\rho \in (\exists r.D)^{\Imc_{w}}$.\\
     Let $C = \Diamond D$.  $\rho \in (\Diamond D)^{\Imc_{w}}$ iff
     there exists $v \in W$ s.t. $\rho\in D ^{\Imc_{v}}$.  By
     inductive hypothesis, $\rho \in D^{\Imc_{v}}$ iff $D \in \rho(v)$
     iff, by~\ref{rn:modal}, $\Diamond D\in\rho(w)$.
\end{proof}
Now we can easily finish the proof of the proposition by observing
that, by~\ref{b1}, there exists a world $w'\in W$ and a type
$\contp \in \funcand(w')$ such that $C_\p \in \contp$. Thus,
by~\ref{run:exists}, there exists $\rho \in \runs$ with
$\rho(w') = \contp$ and, by the above claim,
$\rho\in C_{\p}^{\Imc_{w'}}$.
\end{proof}

We now show that if there exists a quasimodel it exists one of
exponential size.
%
\begin{theorem}
  There exists an $\SfiveALCOu$ quasimodel for $C_\p$ iff there exists
  an $\SfiveALCOu$ quasimodel for $C_\p$ of exponential size in the
  length of $C_\p$.
\end{theorem}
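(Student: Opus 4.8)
The plan is to prove the nontrivial left-to-right direction by a selective filtration that bounds, separately, the number of worlds and the number of runs by $2^{p(|C_\p|)}$; the converse is immediate, since an exponential-size quasimodel is in particular a quasimodel. First I would exploit the fact that we work over $\SfiveALCOu$, so (as in the proof of Proposition~\ref{th-s5sat}) we may assume the accessibility relation is universal on $W$. Then condition~\ref{rn:modal} becomes global: for a run $\rho \in \runs$, whether $\Diamond C \in \rho(w)$ does not depend on $w$, so each run carries a constant \emph{diamond-profile} $\delta(\rho) = \{ \Diamond C \in \con{C_\p} \mid \Diamond C \in \rho(w) \}$, and~\ref{rn:modal} says exactly that $\rho$ must \emph{self-witness} each of its own diamonds (realize $C$ at some world) while realizing no $C$ with $\Diamond C \notin \delta(\rho)$. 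The key structural fact I would isolate is a \emph{uniformity lemma}: every diamond-profile occurring in the quasimodel occurs at \emph{every} world. Indeed, by~\ref{run:exists} every type $\contp \in \funcand(w)$ equals $\rho(w)$ for some run $\rho$, and such a run visits each world with a type of the same profile; hence every quasistate $\funcand(v)$ contains a type of each occurring profile, and in fact a \emph{safe} one realizing none of the $C$ with $\Diamond C \notin \delta$ (namely $\rho(v)$ itself).

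Next I would select the worlds. Keep one world $w_0$ carrying a type that contains $C_\p$, which exists by~\ref{b1}. For every occurring profile $\delta$ and every $\Diamond C \in \delta$, the original run realizing $\delta$ witnesses $C$ at some world; I add one such witness world to the selected set $W'$, recording there a profile-$\delta$ type $\contp$ with $C \in \contp$. Since there are at most $2^{|\con{C_\p}|}$ profiles and at most $|\con{C_\p}|$ diamonds, $|W'| \le 2^{O(|C_\p|)}$, and $\funcand$ restricted to $W'$ still satisfies~\ref{qs:nom}--\ref{qs:univ} and~\ref{b1}, as the quasistates are left unchanged.

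Then I would \emph{rebuild} the runs over $(W', \funcand|_{W'})$ rather than restrict the old ones, which is precisely what the uniformity lemma buys us. For each $w \in W'$ and each non-nominal type $\contp \in \funcand(w)$ I define a run of profile $\delta(\contp)$ that takes $\contp$ at $w$, takes the recorded witness type at each witness world for $\Diamond C \in \delta(\contp)$, and takes a safe profile-$\delta(\contp)$ type elsewhere (available at every world by the uniformity lemma); by construction this satisfies~\ref{rn:modal}. Nominals are handled by dedicated runs: for each individual name occurring in $C_\p$ I keep the single run threading its (by~\ref{qs:nom} unique) nominal types, so that~\ref{run:nominal} holds. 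Condition~\ref{run:exists} holds because every type of every kept quasistate is realized by one of these runs, and $|\runs| \le |W'| \cdot 2^{|\con{C_\p}|} + |\NI(C_\p)| \le 2^{O(|C_\p|)}$.

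I expect the main obstacle to be making the rebuild simultaneously respect the per-run, cross-world condition~\ref{rn:modal} and the nominal-uniqueness condition~\ref{run:nominal}: a rebuilt non-nominal run must witness all its diamonds and fill the remaining worlds without ever passing through a nominal type, for otherwise two distinct runs would carry the same $\{a\}$ at one world and break~\ref{run:nominal}. Resolving this amounts to choosing witness and filler types to be non-nominal whenever possible and, in the corner case where a diamond can only be witnessed through nominal types, distributing the corresponding runs over distinct witness worlds. This bookkeeping step remains within the exponential budget exactly because there are only exponentially many rebuilt runs and linearly many nominals, so the number of extra witness worlds it forces is again at most $2^{O(|C_\p|)}$.
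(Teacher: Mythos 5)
Your overall strategy has the same selective-filtration shape as the paper's proof (one world carrying a $C_\p$-type by~\ref{b1}, exponentially many witness worlds, then an exponential-size run set), and your uniformity lemma is correct and gives a genuinely clean treatment of the \emph{non-nominal} runs: since the accessibility relation is universal, \ref{rn:modal} does make diamond-profiles constant along runs, so every occurring profile is present, with a safe type, at every world. The genuine gap is exactly where this theorem is delicate, namely nominals without the RDA. You discharge~\ref{run:nominal} by keeping, ``for each individual name, the single run threading its (by~\ref{qs:nom} unique) nominal types''. No such run exists in general. Condition~\ref{qs:nom} makes the $\{a\}$-\emph{type} unique per world, and \ref{run:nominal} makes the run through it unique \emph{per world}, but without the RDA these runs vary from world to world: the threading function $w \mapsto$ (the $\{a\}$-type at $w$) is partial (the $\{a\}$-type may be absent from some quasistates) and, where defined, generally violates~\ref{rn:modal}, because the $\{a\}$-types at different worlds lie on runs with \emph{different} diamond-profiles. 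The paper's own example makes this concrete: any quasimodel for $\exists u.(C \sqcap \Diamond \{a\}) \sqcap \exists u.(\lnot C \sqcap \Diamond \{a\})$ must contain two distinct runs, both carrying $\Diamond\{a\}$, passing through an $\{a\}$-type at two different worlds; a single threading run cannot do both jobs. This is precisely why the paper abandons the ``twin'' technique of Gabbay et al.\ and introduces \emph{w-copies}.

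Your patch for the acknowledged corner case (``distribute the runs needing a nominal witness over distinct fresh witness worlds'') is where the real work lies, and it is not mere bookkeeping. When a rebuilt run $\rho'$ is routed through the $\{a\}$-type at a fresh copy $w'$ of $w_a$, you must say what \emph{every other} run does at $w'$: if the run covering the $\{a\}$-type at $w_a$ keeps its $w_a$-value, \ref{run:nominal} fails at $w'$; replacing it by a non-nominal safe filler may be impossible (all safe types of its profile at $w_a$ can be nominal), and in any case the other nominal types of the copied quasistate still need exactly-one coverage for \ref{run:exists} and \ref{run:nominal} at $w'$. The paper resolves all of this at once: at the copy, exactly two runs \emph{swap} their values and every other run inherits its $w_a$-value; the swap is sound because the added run is paired with a run in $\gruns$ taking the same type at $w_\p$ (hence of equal profile), with \ref{t3} and \ref{t4} guaranteeing the exchanged types are admissible for both runs. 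Your sketch has no analogue of this pairing, so the claim that the step ``remains within the exponential budget'' addresses the size but not the correctness of the construction. A smaller unaddressed point: your witness worlds are fixed per (profile, diamond) pair, so if the world $w$ at which a rebuilt run must realize $\contp$ is itself the unique recorded witness for some $\Diamond C \in \delta(\contp)$ with $C \notin \contp$, that run loses its witness for $\Diamond C$; this is fixable (a second witness world exists, by applying \ref{run:exists} and \ref{rn:modal} to an original run through $(w,\contp)$), but the argument is missing.
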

\begin{proof}
  The $(\Rightarrow)$ direction is straightforward.
  For the $(\Leftarrow)$ direction, we adapt the proof
  of~\cite[Theorem 5.25]{GabEtAl03} where the notion of \emph{twin} is
  substituted by a new notion of \emph{w-copies}. Intuitively, the
  notion of \emph{twin} does not work here due to the absence of the
  RDA. Indeed, the following $\SfiveALCOu$ concept:
  $$
  \exists u.(C \sqcap \Diamond \{a\}) \sqcap \exists u.(\lnot C \sqcap \Diamond \{a\})
  $$
  is satisfiable only without the RDA thanks to the possibility to
  interpret the nominal concept $\{a\}$ as two different domain
  individuals at two different worlds.
  Let's now show the construction of the exponentially bounded
  quasimodel based on the notion of \emph{w-copies}. Suppose that
  $\Qmf = (W, \funcand, \runs)$ is a quasimodel for $C_\p$.  Now, we
  construct from $\Qmf$ a quasimodel
  $\Qmf' = (W_1\cup W_2, \funcand', \runs')$, where
  $\funcand'$ is the restriction of $\funcand$ to $W_1\cup W_2$.
  We start with the construction of $W_1$ and a set of runs
  $\gruns$. First, we add to $W_1$ a world $w_{\p}\in W$ such that
  $C_{\p} \in \contp$, for some $\contp \in \funcand(w_{\p})$. Such a
  world exists due to~\ref{b1}. We then associate with every
  $\contp \in \funcand(w_{\p})$ a run $\rho_{\contp}$ such that
  $\rho_{\contp}(w_{\p}) = \contp$. These runs exist due
  to~\ref{run:exists}. For each $\contp\in \funcand(w_{\p})$ and for
  each $\Diamond C\in \contp$, select a world $w\in W$ such that
  $C\in\rho_{\contp}(w)$ (such a $w$ exists by~\ref{rn:modal}), and
  add $w$ to $W_1$. Thus, the resulting $W_1\subseteq W$ contains at most
  $2^{|\con{C_\p}|}\cdot {|\con{C_\p}|}$ worlds.
  Denote by $\rho'_{\contp}$ the restriction of $\rho_{\contp}$ to
  $W_1$, for every $\contp \in \funcand(w_{\p})$.  By construction,
  for every $\contp \in \funcand(w_{\p})$, $\rho'_{\contp}$ is a run
  through $(W_1, \funcand')$.  We thus set
  $\gruns = \{ \rho'_{\contp} \mid \contp \in \funcand(w_{\p}) \}$, and
  add all runs in $\gruns$ to $\runs'$.
  We now proceed to the construction of $W_2$ and $\runs'$ by using
  \emph{w-copies} of worlds in $W_1$, and by either introducing new
  runs or extending runs in $\gruns$ to worlds in $W_2$. This new
  construction is needed to satisfy property~\ref{run:exists} of
  quasimodels. Indeed, $\gruns$ may not contain runs coming through
  all types in $(W_1, \funcand')$.
  For every $w\in W_1$, let $\contp\in \funcand'(w)$ such that there
  is no $\rho \in \gruns$ with $\contp\in\rho(w)$. Then, there exists
  $t'\in \funcand'(w)$, a run $\rho\in\gruns$, and a run
  $\rho_{\contp,w}\in\runs$ such that:
  \begin{enumerate}
    [label=\textbf{T\arabic*},leftmargin=*,series=run]
  \item $\rho(w) = \contp'$ and $\rho_{\contp,w}(w) = \contp$;
   \item $\rho(w_\p) = \rho_{\contp,w}(w_\p)$;
   \item for any $\Diamond C\in \contp$, $\Diamond C\in \contp'$;
     \label{t3}
   \item if $\Diamond C, \lnot C\in \contp$, then, $\Box C\not\in
     \contp'$.
     \label{t4}
  \end{enumerate}
  The above properties are guaranteed by $\Qmf = (W, \funcand, \runs)$
  being a quasimodel for $C_\p$.
  We thus add $\rho_{\contp,w}$ to $\runs'$ and construct $W_2$ using
  \emph{w-copies} of quasistates in $W_1$ in such a way that: for
  every $w_2\in W_2$ there is a unique $w_1 \in W_1$ such that $w_2$
  is a \emph{w-copy} of $w_1$, and $\funcand'(w_2) =
  \funcand'(w_1)$. Such \emph{w-copies} are added to satisfy
  sub-concepts of the form $\Diamond C$ possibly present in
  $\contp$. We distinguish the following two cases:
  \begin{itemize}
  \item Let $\Diamond \{a\},\lnot \{a\} \in \contp$. Then, let
    $w_a\in W_1$ such that $\{a\}\in\rho(w_a)$ (wich exists
    by~\ref{t3} and by construction of $W_1$), add a fresh new
    \emph{w-copy} of $w_a$, say $w'_a$, to $W_2$ and extend the two
    runs, $\rho$ and $\rho_{\contp,w}$, in the following way:
    \begin{align*}
    \rho_{\contp,w}(w'_a) = \rho(w_a);\quad\quad
    \rho(w'_a) = \rho_{\contp,w}(w_a).
    \end{align*}
    For all other runs in $r\in\runs'$ their type is $w'_a$ is the
    same as the one in $w_a$, i.e., $r(w'_a) = r(w_a)$.
  \item Let $\Diamond C,\lnot C \in \contp$. Then, let $w_c\in W_1$
    such that $C\in\rho(w_a)$ (wich exists by~\ref{t3} and by
    construction of $W_1$) and $C\not\in \rho_{\contp,w}(w_c)$ (for
    otherwise we don't need to introduce a new \emph{w-copy}). Then,
    add a fresh new \emph{w-copy} of $w_c$, say $w'_c$, to $W_2$ and
    extend the two runs, $\rho$ and $\rho_{\contp,w}$, in the
    following way:
    \begin{align*}
    \rho_{\contp,w}(w'_c) = \rho(w_c);\quad\quad
    \rho(w'_c) = \rho_{\contp,w}(w_c).
    \end{align*}
    For all other runs in $r\in\runs'$ their type is $w'_c$ is the
    same as the one in $w_c$, i.e., $r(w'_c) = r(w_c)$.
  \end{itemize}
  It is easy to show that all elements in $\runs'$ are indeed runs
  mainly due to~\ref{t4} and the way \emph{w-copies} are added in
  $W_2$. At the same time, we notice that $\Qmf'$
  respects~\ref{run:exists} and, in particular, \ref{run:nominal} due
  to the way we extend runs.  As for the size of the set of worlds in
  $\Qmf'$, we have that $|W_1\cup W_2|$ contains at most
  $2^{2|\con{C_\p}|}\cdot {|\con{C_\p}|^2}$ worlds.
\end{proof}
\fi

By the previous theorem, we obtain decidability of the $\SfiveALCOu$
formula satisfiability problem in $\NExpTime$. Indeed, to check
whether an $\SfiveALCOu$ formula $\p$ is satisfiable, one can guess a
triple $\Qmf = (W, \funcand, \runs)$ of size exponential in the length
of $\p$, and then check whether $\Qmf$ is an $\SfiveALCOu$ quasimodel
for $\p$.
%
Given that already $\SfiveALC$ formula satisfiability (without nominals and universal role) is known to be $\NExpTime$-hard~\cite[Theorem14.14]{GabEtAl03}, we obtain a matching lower bound.
By Lemma~\ref{lemma:redmludtomlu}, we also get the following result.

\begin{theorem}
Partial $\SfiveALCOud$ formula satisfiability without the RDA is $\NExpTime$-complete.
\end{theorem}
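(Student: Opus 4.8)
The plan is to prove membership in $\NExpTime$ and $\NExpTime$-hardness separately, obtaining the two matching bounds. For the upper bound I would assemble the reductions and quasimodel results established above into a guess-and-verify procedure; for the lower bound I would transfer the known hardness of $\SfiveALC$.

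For membership, I would chain the available reductions. Starting from an $\SfiveALCOud$ formula $\p$, Lemma~\ref{lemma:redmludtomlu} produces in polynomial time an equisatisfiable $\SfiveALCOu$ formula (eliminating the definite descriptions $\defdes C$), and Lemma~\ref{lemma:redformtoconcuniv} turns this, again in polynomial time, into an $\SfiveALCOu$ concept $C_\p$ via internalisation of the concept inclusions using the universal role, with $|C_\p|$ polynomial in $|\p|$. By Proposition~\ref{th-s5sat}, $C_\p$ is partially $\SfiveALCOu$-satisfiable without the RDA iff it admits a quasimodel, and by the preceding theorem such a quasimodel exists iff one of size exponential in $|C_\p|$ --- hence exponential in $|\p|$ --- exists. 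The nondeterministic algorithm therefore guesses a triple $\Qmf = (W, \funcand, \runs)$ of this exponential size and checks that it meets every defining condition of a quasimodel; since each condition is a local property quantifying over the (exponentially many) worlds, types and runs, verification is polynomial in the size of the guess, and thus the whole procedure runs in nondeterministic exponential time.

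For hardness, I would observe that $\SfiveALC$ embeds syntactically into $\SfiveALCOud$: every $\SfiveALC$ formula is already an $\SfiveALCOud$ formula, and since the $\SfiveALC$ fragment involves neither individual names nor nominals, the RDA is vacuous and partial interpretations coincide with total ones on such inputs. Hence $\SfiveALC$ satisfiability is exactly the restriction of partial $\SfiveALCOud$ satisfiability without the RDA to this fragment, and the $\NExpTime$-hardness of the former~\cite[Theorem~14.14]{GabEtAl03} carries over. Combining the two bounds yields $\NExpTime$-completeness.

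I do not expect real difficulty in this final step, as its content is essentially bookkeeping on top of the earlier results; the substantive work was already discharged in the exponential-size quasimodel construction, where the lack of the RDA forced the replacement of the standard \emph{twin} argument by the \emph{w-copies} technique in order to satisfy condition~\ref{run:exists} while allowing a nominal $\{a\}$ to denote different elements at different worlds. The only points meriting a line of justification here are that two polynomial reductions followed by a single exponential blow-up keep the guessed object exponential in $|\p|$, and that checking the quasimodel conditions is polynomial in the size of the guess.
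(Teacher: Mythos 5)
Your proposal is correct and follows essentially the same route as the paper: membership by chaining Lemmas~\ref{lemma:redmludtomlu} and~\ref{lemma:redformtoconcuniv} into Proposition~\ref{th-s5sat} and the exponential-size quasimodel theorem, then guessing and verifying an exponential-size quasimodel; hardness by embedding $\SfiveALC$ and invoking its known $\NExpTime$-hardness. Your explicit remark that the lower-bound embedding is sound because, in the absence of individual names and nominals, the RDA is vacuous and partial and total interpretations coincide, is a detail the paper leaves implicit but is a welcome clarification rather than a deviation.
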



\section{Reasoning in temporal free description logics}
\label{sec:reasontfdl}


In the following,\nb{A: added sentence} we show that temporal DLs
without RDA are undecidable, differently from the case with RDA which
has been shown to be decidable in the absence of definite
descriptions~\cite[Theorem 14.12]{GabEtAl03}.
We start considering the DL $\LTLfALCOu$, obtained from $\LTLfALCOud$
by dropping definite descriptions.  We show that, without the RDA for
individual names, the formula satisfiability problem is
undecidable. This\nb{A: added} result holds already for \emph{total}
interpretations (hence also for the
\emph{partial} case).
The undecidability is due to the interaction between temporal
operators, the universal role, and nominals interpreted non-rigidly
over time.
The main proof, that adapts an analogous one due to Degtyarev et
al.~\cite{DegEtAl02}, goes by a reduction of the halting problem for
Minsky machines to total $\LTLfALCOu$ formula satisfiability without
the RDA.

\begin{theorem}
\label{prop:tlalcoufintotundec}
$\LTLfALCOu$\nb{A: removed ``Total''} formula satisfiability without
the RDA is undecidable.
\end{theorem}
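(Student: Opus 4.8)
The plan is to reduce the halting problem for two-counter (Minsky) machines to $\LTLfALCOu$ formula satisfiability without the RDA, following the strategy of Degtyarev et al.~\cite{DegEtAl02}. A Minsky machine $M$ has two counters $c_1,c_2$ and finitely many instructions $q_0,\dots,q_k$ of the form ``increment $c_i$ and go to $q_j$'', ``if $c_i=0$ go to $q_j$, else decrement $c_i$ and go to $q_l$'', and ``halt''; its halting problem (from the all-zero configuration) is undecidable. I would design a formula $\p_M$ whose total, finite-trace models without the RDA correspond exactly to halting computations of $M$, so that $M$ halts iff $\p_M$ is satisfiable.

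First I would fix the encoding. The instants of a finite trace play the role of successive machine configurations. I use concept inclusions of the form $\top\sqsubseteq S_j$ as global propositions marking the current instruction $q_j$, enforcing under $\Box^{+}$ (and using the universal role) that exactly one $S_j$ is ``on'' at each instant, where ``on'' abbreviates $\top\sqsubseteq S_j$ and ``off'' abbreviates $\top\sqsubseteq\lnot S_j$. The value of counter $c_i$ at an instant is encoded as the \emph{cardinality} of the extension of a concept name $A_i$ there, so that the zero-test ``$c_i=0$'' becomes the atomic formula $\top\sqsubseteq\lnot A_i$. Initialization at instant $0$ asserts $S_0$ together with $\top\sqsubseteq\lnot A_1$ and $\top\sqsubseteq\lnot A_2$, and the halting requirement is $\Diamond^{+} S_{\mathrm{halt}}$.

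The heart of the construction is forcing each step to change the relevant counter by \emph{exactly one}, and this is where non-rigidity is essential. I use a non-rigid individual name $a\in\NI$, forced to denote by $\top\sqsubseteq\exists u.\{a\}$, to mark the single element that enters or leaves $A_i$ at a transition. For an increment of $c_1$ I conjoin, guarded by $S_j$ and under $\Box^{+}$: the inclusion $A_1\sqsubseteq\Next A_1$ (nothing leaves), the pair of inclusions abbreviated $\{a\}\equiv \Next A_1\sqcap\lnot A_1$ (the unique $a$-element is the one that enters), the pair $A_2\equiv\Next A_2$ (the other counter is untouched), and $\Next S_{j'}$ for the successor instruction; decrement is symmetric, with $a$ marking the element removed via $\{a\}\equiv A_1\sqcap\Next\lnot A_1$ and $\lnot A_1\sqsubseteq\Next\lnot A_1$, and the two branches of a test instruction are separated by $\top\sqsubseteq\lnot A_1$ versus its negation. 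Crucially, $a$ must denote \emph{different} domain elements at different instants, which is exactly what dropping the RDA permits; under the RDA the same element would be added at every increment, so the counters could never grow and the simulation would collapse — consistently with the decidability of the RDA case without descriptions~\cite[Theorem 14.12]{GabEtAl03}.

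Finally I would verify both directions. If $M$ halts, a finite computation of length $N$ yields a total finite trace over an infinite domain satisfying $\p_M$. Conversely, any total finite-trace model of $\p_M$ induces a sequence of configurations that, by the guarded inclusions, respects $M$'s transition function and reaches $\mathrm{halt}$; since every non-halting instruction demands a strong-$\Next$ successor that fails at the last instant, the trace must terminate precisely in the halt state, so $M$ halts. This establishes undecidability for total interpretations, and the partial case follows via Lemma~\ref{lemma:redtotaltopartial}. The main obstacle is the ``exactly one'' encoding: verifying that the single non-rigid nominal, together with the monotonicity and frame inclusions, genuinely enforces unit increments and decrements in \emph{every} model — ruling out uncontrolled changes of $A_1,A_2$ — and confirming that no feature beyond $\LTLfALCOu$ (in particular, no past operators and no definite descriptions) is required.
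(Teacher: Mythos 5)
Your proposal is correct and follows essentially the same route as the paper's proof: a reduction from the halting problem for two-counter Minsky machines to satisfiability on finite total traces, with counter values encoded as cardinalities of concept extensions, the universal role providing globality, non-rigid (forced-to-denote) nominals pinning down the unique element that enters or leaves a counter concept at each step, and the partial case then obtained via Lemma~\ref{lemma:redtotaltopartial}. The remaining differences are cosmetic --- the paper uses four individual names $a_{r_1},a_{r_2},b_{r_1},b_{r_2}$ and ties the halting state to the last instant via $\Box^{+}(Q_L \equiv \last)$, whereas you reuse a single name $a$ and combine $\Diamond^{+}S_{\mathrm{halt}}$ with strong-$\Next$ forcing --- and neither changes the argument.
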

\begin{proof}
  We\nb{A: added sentence} consider the simpler case of a total
  interpretation. The result is obtained by a reduction of the
  (undecidable) halting problem for (two-counter) Minsky machines
  starting with $0$ as initial counters
  values~\cite{DegEtAl02,BaaEtAl17}, that can be encoded
in
$\LTLfALCOu$
without
the RDA.

A \emph{(two-counter) Minsky machine}
is a pair $M = (Q, P)$, where $Q = \{ q_{0}, \ldots, q_{L} \}$ is a set of \emph{states} and $P = ( I_{0}, \ldots, I_{L - 1} )$ is a sequence of \emph{instructions}.
We assume $q_{0}$ to be the \emph{initial state}, and $q_{L}$ to be the \emph{halting state}.
Moreover, the instruction $I_{i}$ is executed at state $q_{i}$, for $0 \leq i < L$.
Each instruction $I$ can have one of the following forms, where $r \in \{ r_{1}, r_{2} \}$ is a \emph{register}, that stores non-negative integers as values,
and $q,q'$ are states:
\begin{enumerate}
	\item $I = +(r,q)$ -- add $1$ to the value of register $r$ and go to state $q$;
	\item $I = -(r,q,q')$ -- if the value of register $r$ is (strictly) greater than $0$, subtract $1$ to the value of $r$ and go to state $q$; otherwise, go to state $q'$.
\end{enumerate}
Given a Minsky machine $M$,
a \emph{configuration of $M$} is a triple $(q, v_{r_{1}}, v_{r_{2}})$, where $q$ is a state of $M$ and $v_{r_{1}}, v_{r_{2}} \in \mathbb{N}$ are the \emph{values} of registers $r_{1}, r_{2}$, respectively.
In the following, we set
$\overline{r_{k}} = r_{3 - k}$, for $k \in \{ 1, 2 \}$.
Given $i, j \geq 0$, we write $(q_{i}, v_{r_{1}}, v_{r_{2}}) \Rightarrow_{M} (q_{j}, v'_{r_{1}}, v'_{r_{2}})$ iff one of the following conditions hold:
\begin{itemize}
	\item $I_{i} = +(r, q_{j})$, $v'_{r} = v_{r} + 1$ and $v'_{\overline{r}} = v_{\overline{r}}$;
	\item $I_{i} = -(r, q_{j}, q')$, $v_{r} > 0$, $v'_{r} = v_{r} - 1$, and $v'_{\overline{r}} = v_{\overline{r}}$;
	\item $I_{i} = -(r, q, q_{j})$, $v_{r} = v'_{r} = 0$,
	and
	$v'_{\overline{r}} = v_{\overline{r}}$.
\end{itemize}
Given an \emph{input} $(u, t) \in \mathbb{N} \times \mathbb{N}$, the \emph{computation of $M$ on input $(u,t)$} is the (unique) longest sequence of configurations
$(p_{0}, v^{0}_{r_{1}}, v^{0}_{r_{2}}) \Rightarrow_{M} (p_{1}, v^{1}_{r_{1}}, v^{1}_{r_{2}}) \Rightarrow_{M} \ldots$,
such that $p_{0} = q_{0}$, $v^{0}_{r_{1}} = u$ and $v^{0}_{r_{2}} = t$.
We say that $M$ \emph{halts} \emph{on input} $(0,0)$ if the computation of $M$ on input $(0,0)$ is finite: thus, its initial configuration takes the form $(q_{0}, 0, 0)$, while its last configuration has the form $(q_{L}, v_{r_{1}}, v_{r_{2}})$.
%
The \emph{halting problem for Minsky machines} is the problem of deciding, given a Minsky machine $M$, whether $M$ halts on input $(0,0)$.
%
This problem is known to be undecidable~\cite{DegEtAl02,BaaEtAl17}.

To represent the computation of a Minsky machine $M$, we use the temporal dimension to model successive configurations in the computation of $M$ on input $(0,0)$. We introduce a concept name $Q_{i}$, for each state $q_{i}$ of $M$. Concept names $R_{1}, R_{2}$ are used to represent the registers $r_{1}, r_{2}$, respectively: the cardinality of their extensions at a given instant
will capture the values
of the registers
at one step
of the computation.
The incrementation, respectively, the decrementation, by one unit to the value of register $r_{k}$
is modelled by requiring the extensions of
concepts $\lnot R_{k} \sqcap \Next R_{k}$, respectively $R_{k} \sqcap \Next \lnot R_{k}$, to be included in those of nominals $\{ a_{r_{k}} \}$ and $\{ b_{r_{k}} \}$, respectively.
Since the individual names $a_{r_{k}}$ and $b_{r_{k}}$ are interpreted \emph{non-rigidly},
the (unique) element added or subtracted from the extension of $R_{k}$ is forced to vary over time.
In the following, for $k \in \{ 1, 2 \}$, we set $\overline{R_{k}} = R_{3 - k}$.
%

The initial configuration $(q_{0}, 0, 0)$ of $M$, together with the assumptions that at each configuration the machine is in exactly one state, and that the halting state is reached only at the last configuration, are represented by the conjunction of the following formulas (recall that $\last := \Box \bot$):
\begin{enumerate}
[label=$(\textbf{S\arabic*})$, align=left, leftmargin=*,series=run]
	\item $R_{1} \sqsubseteq \bot$;\label{item:count1}
	\item $R_{2} \sqsubseteq \bot$;\label{item:count2}
	\item $\top \sqsubseteq Q_{0}$;\label{item:initstate}
	\item $\Box^{+} (\top \sqsubseteq \bigsqcup_{i = 1}^{L} Q_{i})$;\label{item:cover}
	\item $\bigwedge_{i = 1}^{L} \Box^{+} (Q_{i} \sqsubseteq \bigsqcap_{j \neq i} \lnot Q_{j})$;\label{item:disjoint}
	\item $\Box^{+} (Q_{L} \equiv \last)$.\label{item:last}
\end{enumerate}
Instructions of the form $I_{i} = +(r_{k}, q_{j})$ are represented by the conjunction of the following formulas:
\begin{enumerate}
[label=$(\textbf{I\arabic*})$, align=left, leftmargin=*,series=run]
	\item $\Box^{+} ( Q_{i} \sqsubseteq \exists u.(\lnot R_{k} \sqcap \Next R_{k}))$;\label{item:incatleastone}
	\item $\Box^{+} ( Q_{i} \sqsubseteq \forall u. (\lnot R_{k} \sqcap \Next R_{k} \Rightarrow  \{ a_{r_{k}} \}))$;\label{item:incatmostone}
	\item $\Box^{+} ( Q_{i} \sqsubseteq \forall u. (R_{k} \Rightarrow \Next R_{k}))$;\label{item:incprescount}
	\item $\Box^{+} ( Q_{i} \sqsubseteq \forall u. (\overline{R_{k}} \Leftrightarrow \Next \overline{R_{k}}))$;\label{item:incothercount}
	\item $\Box^{+} ( Q_{i} \sqsubseteq \Next Q_{j})$.\label{item:incnext}
\end{enumerate}
Instructions of the form $I_{i} = -(r_{k}, q_{j}, q_{h})$ are
represented by the conjunction of the following formulas:
\begin{enumerate}
[label=$(\textbf{D\arabic*})$, align=left, leftmargin=*,series=run]
	\item $\Box^{+} ( Q_{i}  \sqcap \exists u. R_{k} \sqsubseteq \exists u.( R_{k} \sqcap \Next \lnot R_{k}))$;\label{item:decatleastone}
	\item $\Box^{+} ( Q_{i}  \sqcap \exists u. R_{k} \sqsubseteq \forall u. (R_{k} \sqcap \Next \lnot R_{k} \Rightarrow \{ b_{r_{k}} \} ))$;\label{item:decatmostone}
	\item $\Box^{+} ( Q_{i} \sqcap \exists u. R_{k} \sqsubseteq \forall u. (\Next R_{k} \Rightarrow R_{k}))$;\label{item:decprescount}
	\item $\Box^{+} ( Q_{i} \sqcap \exists u. R_{k} \sqsubseteq \forall u. (\overline{R_{k}} \Leftrightarrow \Next \overline{R_{k}}))$;\label{item:decothercount}
		\item $\Box^{+} ( Q_{i} \sqcap \lnot \exists u. R_{k} \sqsubseteq \forall u. ({R_{1}} \Leftrightarrow \Next {R_{1}}) \sqcap \forall u. ({R_{2}} \Leftrightarrow \Next {R_{2}}) )$;\label{item:decelsecount}
	\item $\Box^{+} ( Q_{i} \sqcap \exists u. R_{k} \sqsubseteq \Next Q_{j})$;\label{item:decnext}
	\item $\Box^{+} ( Q_{i} \sqcap \lnot \exists u. R_{k} \sqsubseteq \Next Q_{h})$.\label{item:decelsenext}
\end{enumerate}

Let $\chi^{M}$ be the conjunction of the CIs above, for $0 \leq i < L$. We require the following claim.

\begin{claim}
\label{cla:minskyhalt}
A
Minsky machine
$M$ halts
on input $(0,0)$
iff
$\chi_{M}$
is satisfiable
on
finite
total
traces
without the RDA.
\end{claim}
\begin{proof}[Proof of Claim~\ref{cla:minskyhalt}]

$(\Rightarrow)$
Let $M$ be a Minsky machine that halts on input $(0,0)$, and let $(p_{0}, 0, 0) \Rightarrow_{M} \ldots \Rightarrow_{M} (p_{n}, v^{n}_{r_{1}}, v^{n}_{r_{2}})$ be the computation of $M$ on input $(0,0)$, where $p_{0} = q_{0}$, $n \geq 0$ and $p_{n} = q_{L}$.
We construct a finite
total
trace
$\Mmf = (\Delta, (\Imc_{t})_{t \in \Tmf})$
such that $\Mmf, 0 \models \chi_{M}$.
%
Let $\Tmf = [0, n]$,
and
let $\Delta$
be a
fixed
countable set.
Given
$t \in [ 0, n ]$
and $p_{t} = q_{i}$, for some
$0 \leq i \leq L$,
we set $Q_{i}^{\Imc_{t}} = \Delta$ and $Q_{j}^{\Imc_{t}} = \emptyset$, for all $j \neq i$.
%
In addition, we set $R_{1}^{\Imc_{0}} = R_{2}^{\Imc_{0}} = \emptyset$, while
$a_{r_{k}}^{\Imc_{0}} = d \in \Delta$, and $b_{r_{k}}^{\Imc_{0}} = e \in \Delta$, for some $d \neq e$.
Moreover, for an instant
$0 \leq t < n$,
we define the interpretations $R_{1}^{\Imc_{t+1}}$, $R_{2}^{\Imc_{t+1}}$, $a_{r_{k}}^{\Imc_{t+1}}$, and $b_{r_{k}}^{\Imc_{t+1}}$
inductively as follows.
\begin{itemize}
	\item If $p_{t} = q_{i}$ and $I_{i} = +(r_{k}, q_{j})$, then we set:
	$R_{k}^{\Imc_{t + 1}} = R_{k}^{\Imc_{t}} \cup \{ d\}$, where $d = a_{r_{k}}^{\Imc_{t}}$;
	$a_{r_{k}}^{\Imc_{t+1}} = d'$, for a $d' \not \in R_{k}^{\Imc_{t}}$;
	$b_{r_{k}}^{\Imc_{t+1}} = a_{r_{k}}^{\Imc_{t}}$;
	$\overline{R_{k}}^{\Imc_{t + 1}} = \overline{R_{k}}^{\Imc_{t}}$.
	\item If $p_{t} = q_{i}$ and $I_{i} = -(r_{k},q_{j},q_{h})$, then:
		\begin{itemize}
			\item if
			$R_{k}^{\Imc_{t}} \neq \emptyset$,
			then we set:
	$R_{k}^{\Imc_{t + 1}} = R_{k}^{\Imc_{t}} \setminus \{ e \}$, where $e = b_{r_{k}}^{\Imc_{t}}$;
	$b_{r_{k}}^{\Imc_{t+1}} = e'$,
	for an $e' \in R_{k}^{\Imc_{t + 1}}$, if $R_{k}^{\Imc_{t + 1}} \neq \emptyset$, and $b_{r_{k}}^{\Imc_{t+1}}$ arbitrary, otherwise;
	$a_{r_{k}}^{\Imc_{t+1}} = a_{r_{k}}^{\Imc_{t}}$;
	$\overline{R_{k}}^{\Imc_{t + 1}} = \overline{R_{k}}^{\Imc_{t}}$.
			\item if
			$R_{k}^{\Imc_{t}} = \emptyset$,
			then we set: $R_{1}^{\Imc_{t + 1}} = R_{1}^{\Imc_{t}}$, $R_{2}^{\Imc_{t + 1}} = R_{2}^{\Imc_{t}}$; $a_{r_{k}}^{\Imc_{t+1}} = a_{r_{k}}^{\Imc_{t}}$; 	$b_{r_{k}}^{\Imc_{t+1}} = b_{r_{k}}^{\Imc_{t}}$.
		\end{itemize}
\end{itemize}
It can be seen that, by construction, $\Mmf, 0 \models \chi_{M}$.

$(\Leftarrow)$ Given a Minsky machine $M$, suppose that $\chi_{M}$
is satisfied on a
finite
total
trace
$\Mmf = (\Delta, (\Imc_{t})_{t \in \Tmf})$, with $\Tmf = [0, n]$.
%
To each time point
$t \in [ 0, n ]$,
we associate a configuration of $M$ of the form $(p_{t}, v^{t}_{r_{1}}, v^{t}_{r_{2}})$ so that $p_{t} = q_{j}$, where $q_{j}$ is the state of $M$ such that $\Mmf, t \models \top \sqsubseteq Q_{j}$, which is entailed by~\ref{item:cover}~and~\ref{item:disjoint}, and $v^{t}_{r_{k}} = | R_{k}^{\Imc_{t}} |$.
%

Let $t = 0$.
By~\ref{item:initstate}, we have $p_{0} = q_{0}$,
and $v^{0}_{r_{k}} = | R_{k}^{\Imc_{0}} | = 0$,
by~\ref{item:count1},~\ref{item:count2}.
Thus, the initial configuration of the computation of $M$ is
$(q_{0}, 0, 0)$.

For $0 \leq t < n$, we show that $(p_{t}, v^{t}_{r_{1}}, v^{t}_{r_{2}}) \Rightarrow_{M} (p_{t + 1}, v^{t + 1}_{r_{1}}, v^{t + 1}_{r_{2}})$. Let $p_{t} = q_{i}$.

\begin{itemize}
	\item Suppose $I_{i} = +(r_{k}, q_{j})$. By~\ref{item:incnext}, we have that $\Mmf, t + 1 \models \top \sqsubseteq Q_{j}$, and thus $p_{t + 1} = q_{j}$. Moreover, by~\ref{item:incatleastone},~\ref{item:incatmostone}, and~\ref{item:incprescount}, we have that $| R_{k}^{\Imc_{t + 1}} | = | R_{k}^{\Imc_{t}} | + 1$, thus $v^{t + 1}_{r_{k}} = v^{t}_{r_{k}} + 1$. Instead, by~\ref{item:incothercount}, $| \overline{R_{k}}^{\Imc_{t + 1}} | = | \overline{R_{k}}^{\Imc_{t}} |$, hence $v^{t + 1}_{\overline{r_{k}}} = v^{t}_{\overline{r_{k}}}$.
	\item Suppose $I_{i} = -(r_{k}, q_{j}, q_{h})$.
		\begin{itemize}
			\item If $\Mmf, t \models \top \sqsubseteq \exists u. R_{k}$, we have by~\ref{item:decnext} that $\Mmf, t + 1 \models \top \sqsubseteq Q_{j}$, and thus $p_{t + 1} = q_{j}$. Moreover, by~\ref{item:decatleastone},~\ref{item:decatmostone}, and~\ref{item:decprescount}, we have that $| R_{k}^{\Imc_{t + 1}} | = | R_{k}^{\Imc_{t}} | - 1$, thus $v^{t + 1}_{r_{k}} = v^{t}_{r_{k}} - 1$. Instead, by~\ref{item:decothercount}, $| \overline{R_{k}}^{\Imc_{t + 1}} | = | \overline{R_{k}}^{\Imc_{t}} |$, hence $v^{t + 1}_{\overline{r_{k}}} = v^{t}_{\overline{r_{k}}}$.
			\item If $\Mmf, t \not \models \top \sqsubseteq \exists u. R_{k}$, we have by~\ref{item:decelsenext} that $\Mmf, t + 1 \models \top \sqsubseteq Q_{h}$, and thus $p_{t + 1} = q_{h}$. Moreover, by~\ref{item:decelsecount}, we have both that $| R_{k}^{\Imc_{t + 1}} | = | R_{k}^{\Imc_{t}} | = 0$, thus $v^{t + 1}_{r_{k}} = v^{t}_{r_{k}} = 0$, and $| \overline{R_{k}}^{\Imc_{t + 1}} | = | \overline{R_{k}}^{\Imc_{t}} |$, hence $v^{t + 1}_{\overline{r_{k}}} = v^{t}_{\overline{r_{k}}}$.
		\end{itemize}
\end{itemize}

Finally, since
$n$ is the last instant of the trace, i.e., $\Mmf, n \models \top \sqsubseteq \last$, we have by~\ref{item:last} that $\Mmf, n \models \top \sqsubseteq Q_{L}$,
Therefore, the last configuration of the computation of $M$ (and the last only) takes the form $(q_{L}, v^{n}_{r_{1}}, v^{n}_{r_{2}})$, as required.
\end{proof}

Thanks to the previous claim, and the fact that the halting problem for Minsky machines is undecidable~\cite{DegEtAl02,BaaEtAl17}, we conclude the proof.
\end{proof}


As a next step, we require the following polynomial-time reduction, from formula satisfiability on finite traces to the same problem over infinite traces.

\begin{lemma}
\label{lem:tlalcoured}
	%
	Partial and total $\LTLfALCOu$ formula satisfiability without the RDA are polynomial-time reducible to, respectively, partial and total $\LTLALCOu$ formula satisfiability without the RDA.
\end{lemma}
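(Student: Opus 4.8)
The plan is to simulate a finite trace over $[0,n]$ inside an infinite trace over $(\mathbb{N},<)$ by marking the ``live'' prefix with a fresh concept name $E \notin \con{\p}$ and relativising every temporal operator of $\p$ to the instants where $E$ holds. I use $E$ as a \emph{global proposition}: at each instant its extension is either $\Delta$ or $\eset$, enforced at every instant by the conjunct $\Box^{+}(\exists u.E \sqsubseteq \forall u.E)$ (whose CI holds iff $E^{\Imc_{t}} \neq \eset$ implies $E^{\Imc_{t}} = \Delta$). Writing $e := (\top \sqsubseteq E)$ for the formula true exactly at those instants where $E = \Delta$, I force the live region to be a non-empty finite prefix $[0,n]$ by conjoining $e$ (non-emptiness at $0$), $\Box^{+}(\Next e \Rightarrow e)$ (once $E$ becomes empty it stays empty, so the region is a prefix), and $\Diamond \lnot e$ (the region is finite). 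Let $\p_{E}$ denote the conjunction of these three constraints.

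Next I would define the relativisation $(\cdot)^{\ast}$ commuting with every Boolean, nominal, and role constructor, being the identity on concept and individual names, and treating the only time-crossing operator, $\Until$, by
\[
(C \Until D)^{\ast} = C^{\ast} \Until (E \sqcap D^{\ast}), \qquad (\p \Until \psi)^{\ast} = \p^{\ast} \Until (e \land \psi^{\ast}).
\]
Since $\Next$, $\Diamond$, $\Box$ and their reflexive variants are abbreviations built from $\Until$, they are automatically relativised; in particular $(\Diamond C)^{\ast} = \Diamond(E \sqcap C^{\ast})$, and $\last^{\ast}$ holds precisely at the last live instant, matching the finite-trace reading of $\last$. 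The reduction maps $\p \mapsto \p^{\ast} \land \p_{E}$; this is computable in linear time, introduces no new individual name, and adds no denotation constraint on any $a \in \NI$, so it preserves both the partial/total distinction and the absence of the RDA.

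For correctness I would prove, by a routine simultaneous induction on concepts and formulas, a \emph{localisation claim}: whenever the live region of an infinite trace $\Mmf'$ is exactly $[0,n]$, then for every $t \le n$, every $d \in \Delta$, and every subconcept $C$ and subformula $\p$ of the input, $d$ lies in the extension of $C$ at $t$ in the restriction of $\Mmf'$ to $[0,n]$ iff $d \in (C^{\ast})^{\Imc'_{t}}$, and the restriction satisfies $\p$ at $t$ iff $\Mmf', t \models \p^{\ast}$. The $(\Rightarrow)$ direction then takes a finite trace $\Mmf = (\Delta,(\Imc_{t})_{t \in [0,n]})$ satisfying $\p$ at $0$, extends it to an infinite trace by freezing all interpretations for $t > n$ and setting $E^{\Imc'_{t}} = \Delta$ for $t \le n$ and $E^{\Imc'_{t}} = \eset$ otherwise; then $\p_{E}$ holds and, by the claim, so does $\p^{\ast}$ at $0$. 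The $(\Leftarrow)$ direction takes an infinite model of $\p^{\ast} \land \p_{E}$, reads off from $\p_{E}$ that the live region is a finite prefix $[0,n]$, and restricts to $[0,n]$ to obtain a finite trace satisfying $\p$ at $0$.

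The main obstacle is making the relativisation faithful at the right boundary $t = n$: I must check that a concept- or formula-level $\Until$ becomes unsatisfiable there after translation (no future instant carries $E$), so that $\last$ and the strong $\Next$ inherit exactly their finite-trace truth values, and that the intermediate points demanded by $\Until$ never escape the region, which follows from the downward closure of $E$ together with the conjunct $E \sqcap D^{\ast}$ pinning the witness inside $[0,n]$. The global-proposition constraint is what guarantees that this boundary $n$ is common to all domain elements, so that the restriction is a genuine uniform finite trace rather than an element-dependent one.
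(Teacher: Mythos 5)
Your proof is correct, but it takes a genuinely different route from the paper's. The paper discharges the lemma by composing two known translations: first the standard translation of temporal DLs into one-variable first-order temporal logic $\QTLfr{\Until}{}{}$, and then the translation $\cdot^{\dagger}$ of~\cite{ArtEtAl22}, which converts satisfiability on finite traces into satisfiability on infinite traces at the first-order temporal level, with the partial/total and non-RDA features claimed to extend ``naturally'' to that setting. You instead inline the entire construction inside the DL language itself: a fresh concept name $E$, forced by $\Box^{+}(\exists u.E \sqsubseteq \forall u.E)$ to behave as a global proposition, marks a finite non-empty prefix $[0,n]$ of the infinite trace, and your relativisation $(\cdot)^{\ast}$ pins every $\Until$-witness (hence every derived operator $\Next$, $\Diamond$, $\Box$, $\last$) inside the live region, with downward closure of the region guaranteeing that intermediate instants never escape it. Your approach buys self-containedness and transparency: the reduction never leaves $\LTLALCOu$, introduces no individual names, and so visibly preserves partiality, totality and the absence of the RDA --- points that the paper's citation-based argument leaves implicit. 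Indeed, the paper's chain terminates in a $\QTLfr{\Until}{}{}$ formula rather than an $\LTLALCOu$ one, so strictly speaking it needs an extra remark that the composed translation lands back in (a fragment corresponding to) $\LTLALCOu$; your construction avoids this issue altogether, which matters since the lemma is used to transfer undecidability specifically to $\LTLALCOu$ over infinite traces. What the paper's approach buys in exchange is brevity and reuse of already-verified machinery. Your boundary analysis (that $\last^{\ast} = \lnot(\top \Until e)$ holds exactly at the last live instant, and that the strong next becomes vacuous there) is the key check, and it is right; the remaining localisation induction is routine as you claim.
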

\begin{proof}
It is enough to observe that a $\LTLALCOu$ formula $\p$ can be transformed, 
by using the standard translation of temporal DLs into temporal first-order logic~\cite{GabEtAl03}, into an equisatisfiable $\QTLfr{\Until}{}{}$ formula
$\p'$.
Such a $\QTLfr{\Until}{}{}$ formula $\p'$ can be then mapped, by using the translation $\cdot^{\dagger}$
given in~\cite{ArtEtAl22},
into a $\QTLfr{\Until}{}{}$ formula $\p'^{\dagger}$ such that $\p'$ is satisfiable on finite, respectively partial or total, traces iff $\p'^{\dagger}$ is satisfiable on infinite traces, respectively partial or total (the notion of partial trace, given for  $\LTLALCOu$, can be naturally extended to the $\QTLfr{\Until}{}{}$ case).
\end{proof}

Given that $\LTLALCOu$ is a syntactic fragment of $\LTLALCOud$, as an
immediate consequence of Theorem~\ref{prop:tlalcoufintotundec} and the
reductions given in Lemmas~\ref{lemma:redtotaltopartial}
and~\ref{lem:tlalcoured}, we obtain the following result, holding for
total, and hence partial, interpretations.

\begin{theorem}
  $\LTLALCOu$,\nb{A: added} $\LTLALCOud$ and $\LTLfALCOud$ formula satisfiability
  without the RDA are undecidable.
\end{theorem}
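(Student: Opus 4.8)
The plan is to obtain this statement as a corollary of the undecidability already established on finite traces, namely Theorem~\ref{prop:tlalcoufintotundec}, by propagating it along the reductions of Lemmas~\ref{lemma:redtotaltopartial} and~\ref{lem:tlalcoured} together with a syntactic-fragment argument. No new encoding is required: the whole combinatorial effort, the Minsky-machine simulation, is carried out once for $\LTLfALCOu$ in Theorem~\ref{prop:tlalcoufintotundec}, and here I would merely transfer it to the three languages at hand, for total and hence partial interpretations.

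First I would settle the infinite-trace case without definite descriptions, $\LTLALCOu$. Lemma~\ref{lem:tlalcoured} gives a polynomial-time reduction of total $\LTLfALCOu$ formula satisfiability (over finite traces) to total $\LTLALCOu$ formula satisfiability (over infinite traces), both without the RDA. Since the source problem is undecidable by Theorem~\ref{prop:tlalcoufintotundec} and many-one reducibility preserves undecidability, total $\LTLALCOu$ satisfiability without the RDA is undecidable. I would then lift this to the richer languages via the observation that $\LTLALCOu$ is a syntactic fragment of $\LTLALCOud$: every $\LTLALCOu$ formula is also an $\LTLALCOud$ formula, and the semantics of the shared constructors coincides, so a description-free formula is satisfiable in the fragment exactly when it is satisfiable in the full language. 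Hence the identity map is a trivial reduction and total $\LTLALCOud$ satisfiability without the RDA is undecidable; applying the same argument to $\LTLfALCOu \subseteq \LTLfALCOud$ yields undecidability of total $\LTLfALCOud$ satisfiability directly from Theorem~\ref{prop:tlalcoufintotundec}.

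Finally, to pass from total to partial interpretations I would invoke Lemma~\ref{lemma:redtotaltopartial}, whose construction (conjoining $\Box^{+}(\top \sqsubseteq \exists u.\{ a \})$ for each individual name $a$) is insensitive to the underlying frame class and so applies uniformly to finite and infinite traces. This reduces total satisfiability to partial satisfiability, so undecidability of the three total problems carries over to their partial counterparts, establishing the theorem in full.

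I do not expect a genuine obstacle here, since the hard reduction lives in Theorem~\ref{prop:tlalcoufintotundec}. The only point deserving care, and it is routine, is verifying that the fragment inclusion is faithful, i.e.\ that discarding definite descriptions neither enlarges nor shrinks the set of models of a description-free formula; this is immediate because the value of terms, the extension of concepts, and the satisfaction of formulas not mentioning $\defdes$ are defined identically in both languages, so the class of satisfying interpretations is literally the same.
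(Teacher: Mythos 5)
Your proposal is correct and follows essentially the same route as the paper: the paper likewise obtains this theorem as an immediate consequence of Theorem~\ref{prop:tlalcoufintotundec}, the finite-to-infinite-trace reduction of Lemma~\ref{lem:tlalcoured}, the observation that $\LTLALCOu$ (resp.\ $\LTLfALCOu$) is a syntactic fragment of $\LTLALCOud$ (resp.\ $\LTLfALCOud$), and the total-to-partial reduction of Lemma~\ref{lemma:redtotaltopartial}. Your additional care about the faithfulness of the fragment inclusion and the frame-class insensitivity of Lemma~\ref{lemma:redtotaltopartial} is sound and consistent with how the paper uses these lemmas.
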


Preserving\nb{A: added sentence} RDA is not enough to regain
decidability for the temporal DLs with definite descriptions. Indeed,
by exploiting definite descriptions that behave non-rigidly we can
still encode incrementation and decrementation of counters' values (in
place of non-rigid nominals), yielding the following result,
proved\nb{A: added} for total interpretations, and hence holding for
partial interpretations, too.

\begin{corollary}
$\LTLALCOud$ and $\LTLfALCOud$ formula satisfiability with the RDA is undecidable.
\end{corollary}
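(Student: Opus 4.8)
The plan is to adapt the Minsky-machine reduction of Theorem~\ref{prop:tlalcoufintotundec}, which established undecidability of $\LTLfALCOu$ satisfiability \emph{without} the RDA by using the non-rigid nominals $\{a_{r_k}\}$ and $\{b_{r_k}\}$ to single out the unique domain element added to, or removed from, the extension of $R_k$ at each computation step. The crucial observation is that the RDA, as defined, constrains only individual names in $\NI$ and says nothing about definite descriptions: a term $\defdes C$ may denote different objects at different instants as long as its body $C$ has different singleton extensions there. Hence definite descriptions stay genuinely non-rigid even under the RDA, and they can take over the role previously played by the non-rigid nominals.

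Concretely, I would replace the names $a_{r_k}$ and $b_{r_k}$ by the descriptions $\defdes(\lnot R_k \sqcap \Next R_k)$ and $\defdes(R_k \sqcap \Next \lnot R_k)$. The point is that for any concept $C$ the inclusion $C \sqsubseteq \{\defdes C\}$ holds at an instant exactly when $|C| \le 1$: if $C$ is a singleton the description denotes its element and $\{\defdes C\}$ coincides with $C$; if $C$ is empty the inclusion is vacuous; and if $|C| \ge 2$ the description is undefined, $\{\defdes C\} = \emptyset$, and the inclusion fails. I would therefore rewrite the ``at most one'' constraints~\ref{item:incatmostone} and~\ref{item:decatmostone} as $\Box^{+}(Q_i \sqcap \lnot R_k \sqcap \Next R_k \sqsubseteq \{\defdes(\lnot R_k \sqcap \Next R_k)\})$ and $\Box^{+}(Q_i \sqcap \exists u.R_k \sqcap R_k \sqcap \Next \lnot R_k \sqsubseteq \{\defdes(R_k \sqcap \Next \lnot R_k)\})$, leaving every other formula of $\chi_M$ untouched (none of them uses an individual name). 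Since in any model the state concepts $Q_i$ and the guards $\exists u.R_k$ evaluate to $\Delta$ or $\emptyset$ at each instant, these guarded inclusions force the relevant body to be a singleton precisely at the increment and decrement steps, exactly as the old nominal constraints did.

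With this modified $\chi_M$ I would re-run both directions of Claim~\ref{cla:minskyhalt}. For $(\Rightarrow)$, the trace built from a halting computation adds, respectively removes, exactly one fresh element of $R_k$ per step, so the body of each description is a singleton there and the new inclusions hold; no individual names are used, so the RDA is trivially satisfied while the descriptions still vary over time. For $(\Leftarrow)$, the singleton enforcement again yields $|R_k^{\Imc_{t+1}}| = |R_k^{\Imc_t}| \pm 1$ at the appropriate steps, so the induced sequence of configurations is a valid computation of $M$, and~\ref{item:last} forces the final state to be the halting one. This gives undecidability of $\LTLfALCOud$ satisfiability \emph{with} the RDA on finite total traces; lifting to infinite traces, i.e.\ to $\LTLALCOud$, is then obtained by the same finite-to-infinite reduction used in Lemma~\ref{lem:tlalcoured}, which preserves the RDA and applies equally to the description-enriched language, and the total-to-partial direction is immediate.

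The hard part will be the bookkeeping needed to certify that the description-based constraints reproduce the nominal constraints with no side effects: I must check that $\defdes(\lnot R_k \sqcap \Next R_k)$ denotes \emph{only} when an increment genuinely occurs, so that the guards $Q_i$ and $\exists u.R_k$ are indispensable; that the uniqueness forced by $C \sqsubseteq \{\defdes C\}$ interacts correctly with the ``preserve the count'' and ``other register unchanged'' formulas; and that nothing in the reduction covertly reimposes a rigidity requirement on the descriptions. I expect these checks to be routine but delicate, since the whole idea is that the non-rigidity the simulation needs now hides inside the description bodies rather than in the nominals.
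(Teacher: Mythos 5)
Your proposal is correct and takes essentially the same route as the paper: the paper's own (one-sentence) justification is precisely that the RDA constrains only individual names, so the non-rigid nominals $\{a_{r_{k}}\}$, $\{b_{r_{k}}\}$ in the Minsky-machine encoding of Theorem~\ref{prop:tlalcoufintotundec} can be replaced by definite descriptions, which remain non-rigid under the RDA, with the finite-to-infinite and total-to-partial reductions then applied as before. Your write-up in fact supplies more detail than the paper does, via the key observation that $C \sqsubseteq \{\defdes C\}$ holds exactly when the extension of $C$ has at most one element.
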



\section{Discussion and Future Work}
\label{sec:conc}

\ifshort
We conducted a preliminary study on modal free description logics, in particular on the epistemic free DL $\SfiveALCOud$, and on the
temporal free DL $\LTLALCOud$.
\fi
\ifappendix
We moved first steps in the study of modal free description logics, focussing in particular on the epistemic free DL $\SfiveALCOud$, and on the
temporal free DLs $\LTLALCOud$ and $\LTLfALCOud$.
\fi
Syntactically, these DLs extend the classical $\ALCOu$, with nominals and the universal role, by including definite descriptions and
\ifshort
epistemic
\fi
\ifappendix
modal
\fi
or temporal operators.
\ifshort
Semantically, adapting ideas from the non-modal $\ALCOud$ case, we interpret these DLs over modal interpretations
that allow for \emph{non-denoting} terms and \emph{non-rigid} designators.
\fi
\ifappendix
Semantically, adapting ideas from the non-modal $\ALCOud$ case, we interpret these DLs over modal (epistemic or temporal) interpretations
that allow for:
(i) \emph{non-denoting} terms, that is, individual expressions that can be left uninterpreted at certain states, hence encoding ``error'' or ``empty'' values;
and
(ii) \emph{non-rigid} designators, making individual names and definite descriptions referential devices capable of picking different objects at different states, to represent ``dynamic'' or ``flexible'' values.
\fi
\ifshort
We show that, while formula satisfiability is $\NExpTime$-complete for $\SfiveALCOud$ on epistemic frames, it becomes undecidable 
for temporal $\LTLALCOud$.
\fi
\ifappendix
The main technical results concern formula satisfiability: we show that, while this problem is $\NExpTime$-complete for $\SfiveALCOud$ on epistemic frames, it becomes undecidable 
both for
$\LTLALCOud$ and $\LTLfALCOud$
on, respectively, infinite and finite temporal traces.
\fi
%

\ifshort
On the epistemic side, as future work we plan to:
\fi
\ifappendix
As future work, we intend to strengthen our results, as well as deepen the connections between free DLs with definite descriptions, on the one hand, and modal operators, on the other.
On the epistemic side, we are interested in:
\fi
\ifshort
(i) consider frames for the propositional modal logics $\mathbf{K4}$, $\mathbf{T}$, $\mathbf{S4}$, or $\mathbf{KD45}$,
to model different doxastic or epistemic attitudes~\cite{GabEtAl03};
\fi
\ifappendix
(i) considering frames for the propositional modal logics $\mathbf{K4}$, $\mathbf{T}$, $\mathbf{S4}$, or $\mathbf{KD45}$,
to model different doxastic or epistemic attitudes that might satisfy or not the so-called \emph{factivity} and \emph{introspection} principles~\cite{GabEtAl03};
\fi
\ifshort
(ii) investigate non-rigid descriptions and names
in the context of \emph{non-normal} modal DLs~\cite{DalEtAl19,DalEtAl22,DalEtAl23},
to avoid the \emph{logical omniscience} problem (i.e., an agent knows all the logical truths and all the consequences of their background knowledge), which affects all the systems extending $\mathbf{K}$~\cite{Var86,Var89};
\fi
\ifappendix
(ii) investigating of non-rigid descriptions and names
in the context of \emph{non-normal} modal DLs, such as the ones obtained from the systems $\mathbf{E}$, $\mathbf{M}$, $\mathbf{C}$, and $\mathbf{N}$~\cite{DalEtAl19,DalEtAl22,DalEtAl23},
to avoid the \emph{logical omniscience} problem (i.e., an agent knows all the logical truths and all the consequences of their background knowledge), which affects all the systems extending $\mathbf{K}$~\cite{Var86,Var89};
\fi
\ifshort
(iii) address less expressive DL languages, such as $\ELOud$, in an epistemic setting, and connect them with the recently investigated \emph{standpoint DL} family~\cite{GomEtAl22,GomEtAl23}.
\fi
\ifappendix
(iii) addressing less expressive DL languages, such as $\ELOud$, in an epistemic setting, and connect them with the recently investigated \emph{standpoint DL} family~\cite{GomEtAl22,GomEtAl23}.
\fi

\ifshort
On the temporal side, we believe that the negative results presented here do not entirely undermine the use of definite descriptions on a temporal dimension.
For applications in temporal conceptual modelling and ontology-mediated query answering~\cite{ArtEtAl02,ArtEtAl17}, it is worth exploring whether more encouraging results can be obtained in fragments restricting the use of temporal operators (limited, e.g., to the $\Box$ operator only), or constraining the DL dimension (as in $\LTLALCOd$, without the universal role, or in the $\TDLite$ family~\cite{ArtEtAl14}).
\fi
\ifappendix
On the temporal side, and in particular concerning our undecidability proofs, we observe the following.
\citeauthor{DegEtAl02}~\cite{DegEtAl02} use equality
to encode a one-unit addition or subtraction from a register's value at subsequent steps of a two-counter Minsky machine computation.
In our case, the same effect is due to the interplay between non-rigid (and possibly non-denoting) nominals and the universal role.
%
Related results appear also in~\citeauthor{HamKur15}~\cite{HamKur15},
where the undecidability of \emph{one-variable first-order temporal logic with counting to two}, denoted by FOLTL$^{\neq}$, is proved.
In their setting,  the \emph{difference} existential quantifier $\exists^{\neq}x$ extends the language of one-variable (monadic) first-order temporal logic: formulas like $\exists^{\neq}x \p$ enforce that $\p$ holds for some individual different from the one assigned to $x$.
Their undecidability proof for satisfiability of FOLTL$^{\neq}$ on finite or infinite traces with constant domains is also based on an encoding of the computation of two-counter Minsky machines.

It seems thus that a form of interaction between the universal role and counting constructors (even simply up to one, as in $\LTLALCOu$, with non-rigid and possibly non-denoting nominals) plays a role in these kinds of undecidability proofs.
However, it is not yet clear whether already $\LTLALCO$ or $\LTLALCOd$ satisfiability, on partial temporal interpretations and with non-rigid nominals, but \emph{without} the universal quantifier, is undecidable.
To obtain a similar proof, one would have to suitably replace the universal role with dedicated roles for counters, while forcing them to stay ``global'' enough not to lose any information about the values of the counters at subsequent computation steps.
We leave this open question for future work.

Finally, we remark that the preliminary negative results presented here for temporal free DLs might not entirely undermine the use of definite descriptions on a temporal dimension.
For applications in temporal conceptual modelling and ontology-mediated query answering~\cite{LutEtAl08,ArtEtAl17}, it is worth exploring whether more encouraging
results can be obtained in fragments restricting the use of temporal operators (limited, e.g., to the $\Box$ operator only), or constraining the DL dimension (such as the already mentioned $\LTLALCOd$, without the universal role, or lightweight members of the $\TDLite$ family~\cite{ArtEtAl14}).
Any of these restrictions should, of course, be designed to preserve an interesting degree of interaction between the DL constructors and the non-rigidity of names and descriptions.
\fi

Finally, we are interested in studying in this setting the complexities of other problems than formula satisfiability. Related to \emph{interpolant} and \emph{explicit definition existence}~\cite{ArtEtAl21,ArtEtAl23,KurEtAl23},
\ifshort
the \emph{referring expression existence} problem~\cite{ArtEtAl21a},
i.e., deciding the existence of an individual's description given a signature and an ontology,
is of particular interest to our modal free DLs.
\fi
\ifappendix
the \emph{referring expression existence} problem~\cite{ArtEtAl21a},
which asks whether there exists a description for an individual under an ontology and a given signature,
is of particular interest to epistemic and temporal scenarios with definite descriptions.
\fi

\section*{Acknowledgements}
Andrea Mazzullo acknowledges the support of the MUR PNRR project FAIR - Future AI Research (PE00000013) funded by the NextGenerationEU.


\bibliography{bibliography}

\end{document}